\newcommand{\MICM}{{\textsc{BIMA}}\xspace}
\newcommand{\CICM}{{\textsc{CostIMA}}\xspace}
\newcommand{\Greedy}{{\textsc{Greedy}}\xspace}
\newtheorem{theorem}{Theorem}
\newtheorem{lemma}[theorem]{Lemma}
\newtheorem{proposition}[theorem]{Proposition}
\newtheorem{corollary}[theorem]{Corollary}
\title{Selecting nodes and buying links to maximize the information diffusion in a network \\\normalsize{(Extended Version)}}
\author{Gianlorenzo D'Angelo,$^{1}$ Lorenzo Severini,$^{2}$ Yllka Velaj$^{1,3}$\\
\normalsize{$^{1}$Gran Sasso Science Institute (GSSI).}\\
\normalsize{$^{2}$ISI Foundation.}\\
\normalsize{$^{3}$University of Chieti-Pescara.}\\
}
\date{}
\begin{document}
\maketitle
\begin{abstract} 
The Independent Cascade Model (ICM) is a widely studied model that aims to capture the dynamics of the information diffusion in social networks and in general complex networks. In this model, we can distinguish between active nodes which spread the information and inactive ones. The process starts from a set of initially active nodes called seeds. Recursively, currently active nodes can activate their neighbours according to a probability distribution on the set of edges. After a certain number of these recursive cycles, a large number of nodes might become active. The process terminates when no further node gets activated.

Starting from the work of Domingos and Richardson~\cite{DR01,RD02}, several studies have been conducted with the aim of shaping a given diffusion process so as to maximize the number of activated nodes at the end of the process. One of the most studied problems has been formalized by Kempe et al. and consists in finding a set of initial seeds that maximizes the expected number of active nodes under a budget constraint~\cite{KKT03}.
In this paper we study a generalization of the problem of Kempe et al. in which we are allowed to spend part of the budget to create new edges incident to the seeds. That is, the budget can be spent to buy seeds or edges according to a cost function. The problem does not admin a PTAS, unless $P=NP$. We propose two approximation algorithms: the former one gives an approximation ratio that depends on the edge costs and increases when these costs are high; the latter algorithm gives a constant approximation guarantee which is greater than that of the first algorithm when the edge costs can be small.
\end{abstract}

\section{Introduction}
When a new idea or innovation arises in a network of individuals, it can either quickly propagate to a large part of the network and be adopted by many individuals or immediately expire. Understanding the dynamics that regulate these behaviours has been one of the main goals in the field of complex network analysis and has been studied under the name of \emph{influence spreading} or \emph{information diffusion} analysis problem~\cite{DR01,KKT03}.
The motivating application span several fields: from marketing with the aim of evaluating the success of a new product or maximizing its adoption~\cite{B69,CWW10,GLM01,MMB90,RD02},
to epidemiology in order to limit the diffusion of a virus or disease~\cite{M07,N02}, the study of adoption of innovations~\cite{CKM66,R95,V95}, the analysis of social networks to find influential users and to study how information flows through the network~\cite{BHMW11}, and the analysis of cascading failures in power networks~\cite{ARLV01}.
 
Different models of information diffusion have been introduced in the literature. Two widely studied models are the \emph{Linear Threshold Model} (LTM)~\cite{G78,KKT15,S06} and the \emph{Independent Cascade Model} (ICM)~\cite{GLM01,GLM01a,KKT03,KKT15}.
In both models, we can distinguish between \emph{active}, or \emph{affected}, nodes which spread the information and inactive ones. At the beginning of the process a small percentage of nodes of the graph is set to active in order to let the information diffusion process start. These nodes are called \emph{seeds}. Recursively, currently affected nodes can infect their neighbours with some probability.  After a certain number of these cascading cycles, a large number of nodes might becomes affected in the network.
In LTM the idea is that a node becomes active as more of its neighbours become active. Formally, each node $u$ has a threshold $t_u$ chosen uniformly at random in the interval $[0,1]$. The threshold represents the weighted fraction of neighbours of $u$ that must become active in order for $u$ to become active. During the process, a node $u$ becomes active if the total weight of its active neighbours is greater than $t_u$.
In ICM, instead, an active node $u$ tries to influence its inactive neighbours but the success of node $u$ in activating the node $v$ only depends on the propagation probability of the edge from $u$ to $v$ (each edge has its own value). Regardless of its success, the same node will never get another chance to activate the same inactive neighbour.
The process terminates when no further node gets activated.

An interesting question, in the analysis of the information diffusion through a network, is how to shape a given diffusion process so as to maximize or minimize the number of activated nodes at the end of the process by taking intervention actions. Many intervention actions have been studied in the literature, the most important one has been proposed by Domingos and Richardson in the field of viral marketing and asks to find a small set of ``influential'' seeds in a network in order to activate a large part of the network~\cite{DR01,RD02}. The problem has been formalized by Kempe et al. as follows: if we are allowed to choose at most $k$ seeds, which ones should be selected so as to maximize the number of active nodes resulting from the diffusion process~\cite{KKT15}. 
This problem admits a $(1-\frac{1}{e})$-approximation algorithm and this factor cannot be improved, unless $P=NP$~\cite{KKT15}.
Besides seeds selection, other intervention actions may be used to facilitate the diffusion processes, such as inserting or deleting edges and adding or deleting nodes in the network.
Since in social networks and in other complex networks users can add edges incident to themselves, in this paper we consider the possibility to create a limited number of new edges incident to the initial seed nodes.
In detail, we study the following generalization of the problem of Kempe et al.: we are given a cost for each possible edge that can be created in a network and a budget $k$, we want find a set of seed nodes $A$ and a set of edges $S$ incident to the nodes in $A$ such that the expected number of active nodes at the end of the diffusion process is maximized and the overall cost of $A$ and $S$ does not exceed $k$, assuming that all the seeds have the same cost.

\noindent
\textbf{Related work.}
The problem of choosing initially active nodes to maximize the information diffusion in a network has been widely studied, we refer the interested reader to~\cite{KKT15} and references therein for more detail, while the budgeted version of the seed selection problem was proposed in~\cite{NG13}. In the following we focus on the problem of modifying a graph in order to maximize or minimize the spread of information through a network under LTM and ICM models.

To the best of our knowledge, under LTM, the problems that have been studied are those outlined in what follows. Khalil et al.~\cite{KDS14} consider two types of actions, adding edges to  or  deleting  edges  from  the  existing  network to minimize the information diffusion and they show that this network structure modification problem has a supermodular objective and therefore can be solved by algorithms with provable approximation guarantees. Zhang et al.~\cite{ZAVP15} consider arbitrarily specified set of nodes, and interventions that involve both edge and node removal from the set. They develop algorithms with rigorous performance guarantees and good empirical performance. Kimura et al.~\cite{Kimura08} use a greedy approach to delete edges under the LTM  but do not provide any rigorous approximation guarantees. Kuhlman et al.~\cite{Kuhlman} propose heuristic algorithms for edge removal under a simpler deterministic variant of LTM which is  not  only  hard,  but  also  has  no  approximation  guarantee.  Papagelis~\cite{P15} and Crescenzi et al.~\cite{CDSV16} study the problem of augmenting the graph in order to increase the connectivity or the centrality of a node, respectively and experimentally show that this increases the average number of eventual active nodes.

Under ICM, Wu et al.~\cite{WSZ15} consider intervention actions other than edge addition, edge deletion and seed selection, such as increasing the probability that a node infects its neighbours. They proved that optimizing the selection of these actions with a limited budget is $NP$-hard and is neither submodular nor supermodular. Sheldon et al.~\cite{Sheldon} study the problem of node addition to maximize the spread of information, and provide a counterexample showing that the objective  function  is not submodular. Kimura et al.~\cite{Kimura} propose methods for efficiently finding good approximate solutions on the basis of a greedy strategy for the edge deletion problem under the ICM, but do not provide any approximation guarantees. D'Angelo et al.~\cite{DAngeloSV16} introduce a preliminary
version of the edge addition problem where new edges are incident to a given initial set of active nodes. They focus on the case in which the initial set of seeds is a singleton and they investigate the existence of a constant approximation algorithm.

\noindent
\textbf{Our results.}
In this paper, we focus on the independent cascade model and investigate the problem of selecting a set of initial seeds and adding a small number of edges incident to the seeds, without exceeding a given budget $k$, in order to maximize the spread of information in terms of the expected number of nodes that eventually become active. The problem we analyse differs from above mentioned ones since we make the reasonable restriction that the edges to be added can only be incident to the seed nodes. To our knowledge, similar problems have never been studied for the independent cascade model. We refer to this problem as the \emph{Budgeted Influence Maximization with Augmentation problem} (\MICM).

We observe that the \MICM problem is a generalization of the problem in~\cite{KKT15} and therefore cannot be approximated within a factor greater than $1-\frac{1}{e}$, unless $P=NP$. 

We then focus on approximation algorithms. We first assume that the edge costs are all greater that a given constant $c_{min}$ and, in Section~\ref{sec:algo_min}, we propose two approximation algorithms that guarantee approximation factors of $1-\frac{1}{e^{\frac{c_{\min}}{1+c_{\min}}}}$ and $\left(1-\frac{1}{e}\right)c_{\min}$, respectively. Note that these factors increase with $c_{\min}$ and the second algorithm achieves the optimal approximation of $1-\frac{1}{e}$ when all the edge costs are equal to 1.
It is somewhat expected that when the all costs are high the approximation factor gets closer to the lower bound of $1-\frac{1}{e}$. Indeed, when the cost of buying an edge approaches the cost of buying the node at the tail of this edge, we can buy the node instead of the edge with small increase in the cost. This allows us to exploit the $1-\frac{1}{e}$ approximation algorithm  for the seed selection problem proposed in~\cite{KKT15}.
However, the challenge of our problem consists in finding a good approximation even when the cost function includes small values. 
In Section~\ref{sec:algo}, we focus on the general case and propose an algorithm that guarantees a constant approximation ratio of about $0.0878$. This algorithm outperforms the other two algorithms when the cost is allowed to be small. See Figure~\ref{fig:apxratio} for a summary of the approximation ratios of our algorithms.

\section{Preliminaries}\label{sec:definition}
A social network is represented by a weighted directed graph $G=(V,E,p,c)$, where $V$ represents the set of nodes, $E$ represents the set of relationships, $p:V\times V \rightarrow [0, 1]$ is the propagation probability of an edge, that is the probability that the information is propagated from $u$ to $v$ if $(u,v)\in E$, and $c:V\times V \rightarrow [0, 1]$ is the cost of adding an edge to $E$.

In ICM, each node can be either \emph{active} or \emph{inactive}. If a node is active (or adopter of the innovation), then it is already influenced by the information under diffusion. If a node is inactive, then it is unaware of the information or not influenced.
The process runs in discrete steps. At the beginning of the ICM process, few nodes are given the information, they are known as \emph{seed nodes}. Upon receiving the information these nodes become active. In each discrete step, an active node tries to influence one of its inactive neighbours.  The success of node $u$ in activating the node $v$ depends on the propagation probability of the edge $(u, v)$, independently of the history so far. In spite of its success, the same node will never get another chance to activate the same inactive neighbour. The process terminates when no further nodes become activated from inactive state.

We define the influence of a set $A\subseteq V$ in the graph $G$, denoted $\sigma(A)$, to be the expected number of active nodes in $G$ at the end of the process, given that $A$ is the set of seeds. Given a set $S$ of edges not in $E$, we denote by $G(S)$ the graph augmented by adding the edges in $S$ to $G$, i.e. $G(S) = (V, E \cup S)$. We denote by $\sigma(A,S)$ the influence of $A$ in $G(S)$.

In this paper we look for a set of seeds $A$ and a set of edges $S$, to be added to $G$, incident to these seeds that maximize $\sigma(A,S)$. W.l.o.g. we assume that each seed node can be selected with cost $1$, while each edge $e\in (V\times V)\setminus E$ can be selected with cost $c_e\in [0,1]$. In detail, the \MICM problem is defined as follows: given a graph $G = (V, E)$ and a budget $k$, find a set $A$ of seeds and a set $S$ of edges such that $S \subseteq (A\times V)\setminus E$, $c(A,S)\leq k$, and $\sigma(A, S) $ is maximum, where $c(A,S) = |A| + \sum_{e\in S}c_e$. 

A \emph{live-edge graph} $X=(V, E_X)$ of $G$ is a directed graph where the set of nodes is the same set $V$ and the set of edges $E_X$ is a subset of $E$ given by an edge selection process in which each edge in $E$ belongs to $E_X$ or not according to its propagation probability. In detail, 
we can assume that for each edge $e=(u,v)$ in the graph, we flip a coin of bias $p_e$ and only the edges for which the coin indicated an activation belong to $E_X$. It is easy to show that the information diffusion process is equivalent to a reachability problem in live-edge graphs: given any seed set $A$, the distribution of active node sets after the diffusion process ends is the same as the distribution of node sets reachable from $A$ in live-edge graphs.
We denote by $\chi$ the probability space in which each sample point specifies one possible set of outcomes for all the coin flips on the edges, that is the set of all possible live-edge graphs of $G$. For a set of edges $S\subseteq (V\times V)\setminus E$, the set of all possible live-edge graphs of $G(S)$ is denoted by $\chi(S)$. Given two set of edges $S$, $T$, such that $S\subseteq T$, for each live-edge graph $X$ in $\chi(S)$ we denote by $\chi(T,X)$ the set of live-edge graphs in $\chi(T)$ that have $X$ as a subgraphs and possibly contain other edges in $T\setminus S$. In other words, a live-edge graphs in $\chi(T,X)$ has been generated with the same outcomes as $X$ on the coin flips in the edges of $E\cup S$ and it has other outcomes for edges in $T\setminus S$. The following holds: $|\chi(T,X)| = 2^{|T\setminus S|}$, for each $Y\in\chi(T,X)$ $\mathbb{P}[Y] = \mathbb{P}[Y|X]\mathbb{P}[X]$,  $\mathbb{P}[X] = \sum_{Y\in\chi(T,X)}\mathbb{P}[Y]$, and $\sum_{Y\in\chi(T,X)} \mathbb{P}[Y|X] = 1$.
For a node $a\in V$ and a live-edge graph $X$ in $\chi(S)$, let $R(a, X)$ be the set of all nodes that can be reached from $a$ in graph $X$, that is for each node $u\in R(a,X)$, there exists a path from $a$ to $u$ consisting entirely of live edges with respect to the outcome of the coin flips that generates $X$. Let $R(A, X)=\bigcup_{a \in A} R(a, X)$, then $\sigma(A, S)$ can be computed as $\sigma(A, S)=\sum_{X \in \chi(S)} \mathbb{P}[X]\cdot| R(A, X)|$.

Computing $\sigma(A)$ is $\# P$-complete~\cite{CWW10}, however it has been proven by using the Chernoff bound that it can be  approximated within an arbitrarily good factor by simulating the random process a polynomial number of times~\cite{KKT15}. Therefore, in the rest of the paper we can assume that we can compute $\sigma(A)$ (and $\sigma(A,S)$) within an arbitrary bound. This reflects to an additional factor $1+\epsilon$, for any $\epsilon>0$, to all algorithms presented in this paper. For the sake of clarity, we omit this factor from the approximation factor of our algorithms.

Given a set of edges $S$, for each graph $X\in\chi(S)$ and subset of edges $T\subseteq S$, we denote by $X^T$ the graph obtained by removing edges in $T$ from $X$.
Given two feasible solutions $(A_1,S_1)$ and $(A_2,S_2)$, such that $A_2\subseteq A_1$ and $S_2\subseteq S_1$, we denote with $\delta(A_1,S_1,A_2,S_2)$ the expected number of nodes affected by $(A_1,S_1)$ and not affected by $(A_2,S_2)$, formally:
$$\delta(A_1,S_1,A_2,S_2) = \sum_{X \in \chi(S_1)} \mathbb{P}[X]\cdot\left(|R(A_1, X)| - |R(A_2, X^T)|  \right),$$ where  $T=S_1\setminus S_2$.
\begin{proposition}\label{prop:delta}
 For each $A_2\subseteq A_1\subseteq V$ and $S_2\subseteq S_1\subseteq V\times V$, such that the edges in $S_1$ and $S_2$ are outgoing $A_1$ and $A_2$, respectivey, then  $\delta(A_1,S_1,A_2,S_2) = \sigma(A_1,S_1) - \sigma(A_2,S_2)$.
\end{proposition}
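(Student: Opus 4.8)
The plan is to expand both terms appearing in the definition of $\delta$ using the reachability formula $\sigma(A,S)=\sum_{X\in\chi(S)}\mathbb{P}[X]\cdot|R(A,X)|$. The first summand, $\sum_{X\in\chi(S_1)}\mathbb{P}[X]\cdot|R(A_1,X)|$, is by definition exactly $\sigma(A_1,S_1)$, so the whole proof reduces to showing that the second summand equals $\sigma(A_2,S_2)$, i.e.
$$\sum_{X\in\chi(S_1)}\mathbb{P}[X]\cdot|R(A_2,X^T)| = \sigma(A_2,S_2), \qquad T=S_1\setminus S_2.$$
Subtracting then gives the claimed identity directly.

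The heart of the argument is to regroup the sum over $\chi(S_1)$ along the fibers of the edge-deletion map $X\mapsto X^T$. First I would observe that deleting the edges of $T=S_1\setminus S_2$ from any $X\in\chi(S_1)$ yields a live-edge graph on $E\cup S_2$, so $X^T\in\chi(S_2)$. Conversely, for a fixed $Y\in\chi(S_2)$, the graphs $X\in\chi(S_1)$ with $X^T=Y$ are precisely those in $\chi(S_1,Y)$: by the definition recalled in the preliminaries, each such $X$ agrees with $Y$ on every coin flip for edges of $E\cup S_2$ and differs only on edges of $T$. Deleting $T$ therefore returns exactly $Y$, so $X^T=Y$ and hence $R(A_2,X^T)=R(A_2,Y)$ for every $X$ in this fiber.

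Since the fibers $\{\chi(S_1,Y):Y\in\chi(S_2)\}$ partition $\chi(S_1)$, I would rewrite
$$\sum_{X\in\chi(S_1)}\mathbb{P}[X]\cdot|R(A_2,X^T)| = \sum_{Y\in\chi(S_2)}\sum_{X\in\chi(S_1,Y)}\mathbb{P}[X]\cdot|R(A_2,Y)| = \sum_{Y\in\chi(S_2)}|R(A_2,Y)|\sum_{X\in\chi(S_1,Y)}\mathbb{P}[X],$$
where $|R(A_2,Y)|$ can be pulled out of the inner sum precisely because $X^T=Y$ is constant over the fiber. Applying the identity $\mathbb{P}[Y]=\sum_{X\in\chi(S_1,Y)}\mathbb{P}[X]$ stated in the preliminaries, the inner sum collapses to $\mathbb{P}[Y]$, and the expression becomes $\sum_{Y\in\chi(S_2)}\mathbb{P}[Y]\cdot|R(A_2,Y)|=\sigma(A_2,S_2)$, as required.

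The main obstacle is purely bookkeeping: one must verify carefully both that $X^T=Y$ holds identically on $\chi(S_1,Y)$ (licensing the extraction of $|R(A_2,Y)|$) and that the fiber probabilities sum to $\mathbb{P}[Y]$, keeping track of which edge set each live-edge graph lives on. Both facts are immediate consequences of the properties of $\chi(S_1,Y)$ recorded earlier, so once the indexing is set up correctly the remainder is a routine reindexing of a finite sum. I note that the hypothesis that the edges of $S_1$ and $S_2$ are outgoing from $A_1$ and $A_2$ does not appear to enter this algebraic identity; it serves only to guarantee that $(A_1,S_1)$ and $(A_2,S_2)$ are feasible solutions for which $\delta$ is meaningful.
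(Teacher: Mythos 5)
Your proof is correct and takes essentially the same route as the paper's: both reduce the claim to showing $\sum_{X\in\chi(S_1)}\mathbb{P}[X]\cdot|R(A_2,X^T)|=\sigma(A_2,S_2)$ by partitioning $\chi(S_1)$ into the fibers $\chi(S_1,Y)$ over $Y\in\chi(S_2)$ and collapsing each fiber's mass to $\mathbb{P}[Y]$ — the paper phrases this via $\mathbb{P}[Y]=\mathbb{P}[Y|X]\mathbb{P}[X]$ with $\sum_{Y\in\chi(S_1,X)}\mathbb{P}[Y|X]=1$, while you invoke the equivalent preliminary fact $\mathbb{P}[Y]=\sum_{X\in\chi(S_1,Y)}\mathbb{P}[X]$ directly. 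Your closing observation that the outgoing-edges hypothesis plays no role in the algebra is also consistent with the paper, whose proof never uses it.
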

\begin{proof}
\begin{align*}
\delta(A_1,S_1,A_2,S_2) &= \sum_{X \in \chi(S_1)} \mathbb{P}[X]\cdot\left(|R(A_1, X)| - |R(A_2, X^T)|  \right)\\
& = \sigma(A_1,S_1) - \sum_{X \in \chi(S_1)} \mathbb{P}[X]\cdot|R(A_2, X^T)|\\
& = \sigma(A_1,S_1) - \sum_{X \in \chi(S_2)} \mathbb{P}[X] \sum_{Y \in \chi(S_1)} \mathbb{P}[Y|X]\cdot|R(A_2, X^T)|\\
& = \sigma(A_1,S_1) - \sigma(A_2,S_2).\qedhere
\end{align*}
\end{proof}

We observe that our problem is a generalization of the influence maximization problem in~\cite{KKT15}, indeed it is enough to set $c_e=1$ for each $e\in (V\times V)\setminus E$. It follows that \MICM cannot be approximated within a factor greater than $1-\frac{1}{e}$, unless $P=NP$.

\section{Lower-bounded edge costs}\label{sec:algo_min}
In this section we consider the case in which the edge costs are at least a given value $c_{\min}$, that is for each $e\in V\times V, c_e\geq c_{\min}$. It can be easily shown that in this case selecting a set $A$ of $k$ seed nodes that maximizes $\sigma(A,\emptyset)$ guarantees an approximation factor of $c_{\min}$. Since this problem can be optimally approximated within $1-\frac{1}{e}$~\cite{KKT15}, we can obtain an overall $\left(1-\frac{1}{e}\right)c_{\min}$ approximation. In what follows we give an approximation algorithm that improves over this bound for small values of $c_{\min}$. The following analysis serves also as a warm-up for the analysis of the algorithm proposed in the next section.


\begin{algorithm2e}[t]
\caption{}
\label{alg:greedy}
\SetKwInput{Proc}{Algorithm}
\SetKwInOut{Input}{Input}
\SetKwInOut{Output}{Output}
\Input{A directed graph $G=(V,E)$ and an integer $k\in\mathbb{N}$}
\Output{A set of nodes $A$ and a of edges $S\subseteq (A\times V)\setminus E$ such that $c(A,S)\leq k$}
$A:=\emptyset$; $S:=\emptyset$;
$U:=V$;
$T:=(V\times V)\setminus E$\;
\While{$T\neq \emptyset$ or $U\neq \emptyset$}
{\label{alg:greedy:whilestart}
  $r_1 = \max_{a \in U} \{ \delta(A\cup \{a\},S, A, S) \}$\;\label{alg:greedy:r1}
  $r_2 = \max_{(a,v)\in (A\times V)\cap T} \left\{ \delta(A,S\cup\{(a,v)\},A,S)/c_{(a,v)} \right\}$\;\label{alg:greedy:r2}
  $r_3 = \max_{a\in U, (a,v)\in (\{a\}\times V)\cap T} \left\{ \delta(A\cup \{a\},S\cup\{(a,v)\},A,S)/(1+c_{(a,v)}) \right\}$\;\label{alg:greedy:r3}
  \eIf{$\max\{r_1,r_2,r_3\} = r_1$}
  {
    $\hat{a} = \arg\max_{a\in U} \{ \delta(A\cup \{a\},S,A,S) \}$\;
    \If{$k-1\geq 0$}
    {
      $A := A \cup \{ \hat{a} \}$\;
      $k := k-1$\;
    }
    $U := U\setminus \{ \hat{a} \}$\;
  }
  {
    \eIf{$\max\{r_1,r_2,r_3\} = r_2$}
    {
      $(\hat{a},\hat{v}) : = \arg\max_{(a,v)\in (A\times V)\cap T} \left\{ \delta(A,S\cup\{(a,v)\},A,S)/c_{(a,v)} \right\}$\;
      \If{$k-c_{(\hat{a},\hat{v})}\geq 0$}
      {
        $S := S \cup \{ (\hat{a},\hat{v}) \}$\;
        $k := k-c_{(\hat{a},\hat{v})}$\;
      }
      $T := T \setminus \{ (\hat{a},\hat{v}) \}$\;
    }
    {
      $(\hat{a},(\hat{a},\hat{v})) := \arg\max_{a\in U, (a,v)\in (\{a\}\times V)\cap T} \left\{ \delta(A\cup \{a\},S\cup\{(a,v)\},A,S)/(1+c_{(a,v)}) \right\}$\;
      \If{$k-c_{(\hat{a},\hat{v})}-1\geq 0$}
      {
        $(A,S) := (A \cup \{ \hat{a}\},S \cup \{ (\hat{a},\hat{v}) \}) $\;
        $U := U\setminus \{ \hat{a} \}$\;
        $k := k-1-c_{(\hat{a},\hat{v})}$\;
      }
      $T := T \setminus \{ (\hat{a},\hat{v}) \}$\;
    }
  }
}\label{alg:greedy:whileend}
$(a_{M},(a_{M},v_{M})) : = \arg\max_{a\in V, (a,v)\in (\{a\}\times V)\setminus E} \left\{ \sigma(A\cup \{a\},S\cup\{(a,v)\}) \right\}$\;\label{alg:greedy:max}
\Return $\arg\max\{ \sigma(A,S), \sigma(\{a_M\}, \{(a_{M},v_{M})\}) \}$\;\label{alg:greedy:return}
\end{algorithm2e}
Our algorithm, whose pseudocode is reported in Algorithm~\ref{alg:greedy}, finds two candidate solutions: the first solution is obtained by a greedy algorithm at lines~\ref{alg:greedy:whilestart}--\ref{alg:greedy:whileend}, the second solution is found at line~\ref{alg:greedy:max} and is made of a single node $a_{M}$ and a single edge $(a_{M},v_{M})$ for which $\sigma(\{a_M\},\{(a_M,v_M)\})$ is maximized. Then, the algorithm outputs one of the candidate solutions that maximizes the expected number of affected nodes (line~\ref{alg:greedy:return}).

The greedy phase, at each iteration, selects a solution $(A,S)$ that adds at most one node and one edge to the current solution $(A',S')$ and that maximizes the ratio between $\delta(A,S,A',S')$ and the marginal cost of $(A,S)$, that is the cost of the added node or edge. In particular, it considers three possible ways of obtaining $(A,S)$ from $(A',S')$:
\begin{description}
 \item[line~\ref{alg:greedy:r1}:] select a seed node ${a}$ that maximizes $r_1=\delta(A'\cup \{{a}\},S',A',S')$, $(A,S) = (A'\cup \{{a}\},S')$; 
 \item[line~\ref{alg:greedy:r2}:] select an edge $({a},{v})$ incident to a seed $a$ in $A'$ that maximizes \\$r_2=\frac{\delta(A',S'\cup\{({a},{v})\},A',S')}{c_{({a},{v})}}$, $(A,S) = (A',S'\cup\{({a},{v})\})$; 
 \item[line~\ref{alg:greedy:r3}:]  select a seed node ${a}$ not in $A'$ and edge $({a},{v})$ incident to ${a}$ that maximize $r_3 = \frac{\delta(A'\cup \{{a}\},S'\cup\{({a},{v})\},A',S')}{1+c_{({a},{v})}}$, $(A,S) = (A'\cup \{{a}\},S'\cup\{({a},{v})\})$.
\end{description}
The greedy phase of the algorithm selects a solution $(A,S)$ that maximizes the three above ratios. If $(A,S)$ does not violate the budget, i.e. cost $c(A,S)$ is at most $k$, then it is chosen as new solution.

We denote by $(A^*,S^*)$ an optimal solution to the \MICM problem.  Let us consider the iterations executed by the greedy algorithm in which
an element is added to $(A,S)$. For $i\geq 1$, let us denote by $j_i$ the index of these iterations, $j_i<j_{i+1}$, and let $j_{l+1}$ be the index of the first iteration in which an element in $(A^*,S^*)$ is considered (i.e. it maximizes the above ratios) but not added to $(A,S)$ because it violates the budget constraint.
We denote by $(A_i,S_i)$ the solution at the end of iteration $j_i$ and by $\bar{c}_i$ the marginal cost of $(A_i,S_i)$ as computed in the above three ratios, 
\[
 \bar{c}_i =\left\{
 \begin{array}{llll}
   1   && \mbox{ if } A_i\setminus A_{i-1}=\{{a}\} \mbox{ and } S_i = S_{i-1} & \mbox{(i.e. }\max\{r_1,r_2,r_3\} = r_1\mbox{)}\\
     c_e &&\mbox{ if } A_i=A_{i-1} \mbox{ and } S_i\setminus S_{i-1}=\{({a},{v})\} &\mbox{(i.e. }\max\{r_1,r_2,r_3\} = r_2\mbox{)}\\ 
    1+ c_e &&\mbox{ if } A\setminus A_{i-1}=\{{a}\}  \mbox{ and } S_i\setminus S_{i-1}=\{({a},{v})\}&\mbox{(i.e. }\max\{r_1,r_2,r_3\} = r_3\mbox{)}. 
 \end{array}
 \right.
\]
The next two lemmas are the core of our analysis~\cite{KMN99}.
\begin{lemma}\label{lem:marginalcosts_min}
 After each iteration $j_i$, $i=1,2,\ldots,l+1$, $$\sigma(A_i,S_i) - \sigma(A_{i-1},S_{i-1}) \geq \frac{\bar{c}_i}{k}\frac{c_{\min}}{1+c_{\min}} (\sigma(A^*,S^*) - \sigma(A_{i-1},S_{i-1})).$$
\end{lemma}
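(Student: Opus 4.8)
The plan is to turn the per-iteration inequality into a comparison between the gain/cost ratio achieved by the greedy step at iteration $j_i$ and a single ``mediant'' ratio built from the optimal solution. Write $(A',S')=(A_{i-1},S_{i-1})$ and $\Phi=\sigma(A^*,S^*)-\sigma(A',S')$, and recall that by Proposition~\ref{prop:delta} the marginal gain of the chosen step equals $\sigma(A_i,S_i)-\sigma(A_{i-1},S_{i-1})$ while its marginal cost is $\bar c_i$; moreover the step realises $\max\{r_1,r_2,r_3\}$, so that
$$\frac{\sigma(A_i,S_i)-\sigma(A_{i-1},S_{i-1})}{\bar c_i}=\max\{r_1,r_2,r_3\}.$$
Hence it suffices to prove $\max\{r_1,r_2,r_3\}\ge \frac{c_{\min}}{(1+c_{\min})k}\,\Phi$, since multiplying by $\bar c_i$ gives exactly the statement (for $i=l+1$ the same bound applies to the considered-but-rejected element, using its hypothetical gain). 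I would obtain this by exhibiting a family of candidate moves, one for each element of $(A^*,S^*)$ not yet in $(A',S')$, whose individual gain/cost ratios I can then average.

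\textbf{Decomposition into atoms.} I would split the missing optimal elements into ``atoms'', each a legal move of type $r_1$, $r_2$ or $r_3$ relative to $(A',S')$: a missing seed of $A^*$ carrying no optimal edge becomes an $r_1$-atom of cost $1$; a missing edge of $S^*$ whose tail already lies in $A'$ becomes an $r_2$-atom of cost $c_e$; and for a missing seed $a$ of $A^*$ together with each of its optimal out-edges $(a,v)$ (necessarily all missing, since $S'\subseteq A'\times V$ forces $a\notin A'$) I create one $r_3$-atom $(a,(a,v))$ of cost $1+c_e$, deliberately re-using the seed $a$ once per edge. Writing $\Delta_x$ for the $\sigma$-increment of atom $x$ over $(A',S')$ and $c_x$ for its cost, the first goal is $\sum_x\Delta_x\ge\Phi$. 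This follows from monotonicity and submodularity of $\sigma$: monotonicity gives $\Phi\le\sigma(A'\cup A^*,S'\cup S^*)-\sigma(A',S')$, and telescoping this increment by inserting the missing seeds and then the missing edges, submodularity bounds each telescoped term by its first-insertion marginal over $(A',S')$. The re-use of $a$ across its $r_3$-atoms is exactly what absorbs the edges of $a$: since each $\Delta_{(a,(a,v))}$ already contains the gain of $a$ alone, summing over the out-edges of $a$ dominates the telescoped contribution of $a$ together with all of its edges.

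\textbf{Cost bound and conclusion.} The second goal is $\sum_x c_x\le\frac{1+c_{\min}}{c_{\min}}k$. Splitting by atom type, the total equals (number of edgeless new seeds) $+\ \sum_{(a,v)\in S^*_{\mathrm{out}}}(1+c_e)+\sum_{(a,v)\in S^*_{\mathrm{in}}}c_e$, where $S^*_{\mathrm{out}}$ and $S^*_{\mathrm{in}}$ are the optimal edges with tail outside and inside $A'$. Bounding the edgeless seeds by $|A^*|$, using $c_e\ge c_{\min}$ to get $|S^*_{\mathrm{out}}|\le\frac{1}{c_{\min}}\sum_{S^*_{\mathrm{out}}}c_e\le\frac{k}{c_{\min}}$, and recombining the remaining terms into $|A^*|+\sum_{e\in S^*}c_e=c(A^*,S^*)\le k$, the total is at most $k+\frac{k}{c_{\min}}=\frac{1+c_{\min}}{c_{\min}}k$. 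Finally, every atom is a move actually evaluated by the algorithm at iteration $j_i$: because $j_{l+1}$ is the first iteration in which a considered element is rejected, no element was discarded before being added, so each missing optimal element is still present in $U$ or $T$ at step $j_i$ for every $i\le l+1$. Therefore $\max\{r_1,r_2,r_3\}\ge\max_x\frac{\Delta_x}{c_x}\ge\frac{\sum_x\Delta_x}{\sum_x c_x}\ge\frac{\Phi}{\frac{1+c_{\min}}{c_{\min}}k}$, which is the desired inequality.

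\textbf{Main obstacle.} The delicate points I expect to spend the most care on are (i) justifying that $\sigma$ is monotone and submodular jointly over seed-nodes and seed-incident edges, which I would ground in the live-edge representation by observing that a live out-edge of a seed acts as a stochastically added seed, thereby reducing the claim to the classical submodularity of node-influence; and (ii) the bookkeeping of the $r_3$-atoms, that is, verifying that re-using each new seed once per out-edge simultaneously keeps every atom a legal single-step move and still yields $\sum_x\Delta_x\ge\Phi$. The averaging (mediant) step and the cost estimate are then routine.
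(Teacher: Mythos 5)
Your proposal is correct and takes essentially the same route as the paper: your atom decomposition (edgeless seeds of $A^*_1$, optimal edges with tail already in $A_{i-1}$, and one seed--edge pair per missing out-edge of each $a\in A^*_2$) is exactly the paper's inequality~\eqref{eq:one_min}, your cost bound $\left(1+\frac{1}{c_{\min}}\right)k$ matches the paper's accounting of the at most $k/c_{\min}$ out-edges plus the budget $c(A^*,S^*)\leq k$, and your mediant step is the same as the paper's per-atom comparison with $\delta_i/\bar{c}_i$ (including the treatment of the rejected element at $j_{l+1}$). The only divergence is how~\eqref{eq:one_min} is justified: you telescope via monotonicity and submodularity of the coverage-style extension (seeds first, then edges, with the $r_3$-bundling absorbing each new seed's gain, valid since $m_a\geq 1$), whereas the paper computes it directly through the live-edge coupling $\chi(S^*\cup S_{i-1},X)$; your expectation-level argument is sound and in fact sidesteps a small pointwise sloppiness in the paper's bound~\eqref{eq:three_min}, whose third sum is empty for a seed $a\in A^*_2$ none of whose new edges is live in $Y$ even though $R(a,X)$ may contribute to the left-hand side.
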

\begin{proof}
We denote by $\delta_i$ the expected number of nodes affected by solution $(A_i,S_i)$ and not affected by solution $(A_{i-1},S_{i-1})$, $\delta_i=\delta(A_i,S_i,A_{i-1},S_{i-1})$.
We first show that  the value $\sigma(A^*,S^*) - \sigma(A_{i-1},S_{i-1})$ is at most the sum, for each element in $(A^*,S^*)$ and not in $(A_{i-1},S_{i-1})$, of the expected number of nodes affected by this element and not affected by solution $(A_{i-1},S_{i-1})$, that is the following inequality holds:
\begin{align}
 & \!\!\!\!\sigma(A^*,S^*) - \sigma(A_{i-1},S_{i-1}) \leq \sum_{a\in A^*_1} \delta(A_{i-1}\cup\{a\}, S_{i-1} , A_{i-1},S_{i-1}) + \label{eq:one_min} \\ 
 & \!\!\!\!\sum_{\begin{subarray}{c}e=(a,v)\in S^*\setminus S_{i-1}\\ \text{s.t. } a\in A_{i-1} \end{subarray} }\!\!\!\!\!\!\!\!\!\!\!\! \delta(A_{i-1}, S_{i-1}\cup\{e\}, A_{i-1},S_{i-1})
+\!\!\!\!\!\!\!\! \sum_{\begin{subarray}{c}a\in A^*_2\\ e=(a,v)\in S^*\setminus S_{i-1} \end{subarray} }\!\!\!\!\!\!\!\!\!\!\!\! \delta(A_{i-1}\cup\{a\}, S_{i-1}\cup\{e\}, A_{i-1},S_{i-1}),\nonumber
\end{align}
where we divided the set $A^*\setminus A_{i-1}$ into two subsets: $A^*_1$ is the subset of $A^*\setminus A_{i-1}$ that contains the seeds $a$ with no incident edges in $S^*$ (i.e. $\nexists (a,v)\in S^*$), and $A^*_2= A^*\setminus (A_{i-1}\cup A^*_1)$.
%
%
The difference $\sigma(A^*,S^*) - \sigma(A_{i-1},S_{i-1})$ is at most $\sigma(A^*\cup A_{i-1},S^* \cup S_{i-1}) - \sigma(A_{i-1},S_{i-1})$ and therefore upper-bounded by:
\begin{align}
     & \sum_{X \in \chi(S^*\cup S_{i-1})} \mathbb{P}[X]|R(A^*\cup A_{i-1}, X)| - \sum_{X \in \chi(S_{i-1})} \mathbb{P}[X]|R(A_{i-1}, X)|\nonumber\\
     &= \sum_{X \in \chi(S_{i-1})} \sum_{Y\in\chi(S^*\cup S_{i-1},X)}\mathbb{P}[Y]|R(A^*\cup A_{i-1}, Y)| - \sum_{X \in \chi(S_{i-1})} \mathbb{P}[X]|R(A_{i-1}, X)|\nonumber\\
     &= \sum_{X \in \chi(S_{i-1})}\mathbb{P}[X] \sum_{Y\in\chi(S^*\cup S_{i-1},X)}\mathbb{P}[Y|X]|R(A^*\cup A_{i-1}, Y)| - \sum_{X \in \chi(S_{i-1})} \mathbb{P}[X]|R(A_{i-1}, X)|\nonumber\\
     &= \sum_{X \in \chi(S_{i-1})}\mathbb{P}[X]\left[ \sum_{Y\in\chi(S^*\cup S_{i-1},X)}\mathbb{P}[Y|X]|R(A^*\cup A_{i-1}, Y)| - |R(A_{i-1}, X)|  \right]\nonumber\\
     &= \sum_{X \in \chi(S_{i-1})}\!\!\mathbb{P}[X]\left[ \sum_{Y\in\chi(S^*\cup S_{i-1},X)}\!\!\!\!\!\!\!\!\!\!\!\mathbb{P}[Y|X]|R(A^*\cup A_{i-1}, Y)| - \!\!\!\!\sum_{Y\in\chi(S^*\cup S_{i-1},X)}\!\!\!\!\!\!\!\!\!\!\!\!\mathbb{P}[Y|X]|R(A_{i-1}, X)|  \right]\nonumber\\
     &= \sum_{X \in \chi(S_{i-1})}\mathbb{P}[X]\sum_{Y\in\chi(S^*\cup S_{i-1},X)}\mathbb{P}[Y|X]\left( |R(A^*\cup A_{i-1}, Y)| - |R(A_{i-1}, X)|\right)\label{eq:two_min}
\end{align}
For each $X \in \chi(S_{i-1})$ and $Y\in\chi(S^*\cup S_{i-1},X)$, \\the difference $|R(A^*\cup A_{i-1}, Y)| - |R(A_{i-1}, X)|$ between the nodes reachable from $A^*\cup A_{i-1}$ in $Y$ and those reachable from $A_{i-1}$ in $X$ can be bounded as follows:
\begin{align}
\!\!\!\!\!\!\!\!\!|R(A^*\cup A_{i-1}, Y)| - |R(A_{i-1}, X)| &\leq \sum_{ a \in A^*_1}\left( |R(A_{i-1}\cup\{a\}, X)| - |R(A_{i-1}, X)|\right) \label{eq:three_min}\\ 
&\!\!\!\!\!\!\!\!\!+\sum_{\begin{subarray}{c} e=(a,v)\in Y\setminus X \\ \text{s.t. } a\in A_{i-1} \end{subarray}}\!\!\!\!\!\!\!\!\!\!\!\left(|R(A_{i-1}, X\cup\{e\})| - |R(A_{i-1}, X)| \right) \nonumber \\
&\!\!\!\!+\sum_{\begin{subarray}{c} a\in A^*_2\\ e=(a,v)\in Y\setminus X \\ \end{subarray}}\!\!\!\!\!\!\!\!\!\!\! \left(|R(A_{i-1}\cup\{a\}, X\cup\{e\})| - |R(A_{i-1}, X)| \right).\nonumber
\end{align}
Combining~\eqref{eq:two_min} and the first term of \eqref{eq:three_min}, we obtain the first term of~\eqref{eq:one_min}. To show the second and third term of~\eqref{eq:one_min}, observe that for a function $f:V\times V\rightarrow \mathbb{N}$ and for each $X \in \chi(S_{i-1})$,
\[
 \sum_{Y\in\chi(S^*\cup S_{i-1},X)}\!\!\!\!\!\!\!\! \mathbb{P}[Y|X] \sum_{e\in Y\setminus X} f(e) \leq \!\!\!\!\sum_{e\in S^*\setminus S_{i-1}}\!\!p_e \sum_{Y\in\chi(S^*\cup S_{i-1}\setminus\{e\},X)} \!\!\!\!\!\!\!\!\!\! \mathbb{P} [Y|X\cup\{e\}]f(e) = \!\! \sum_{e\in S^*\setminus S_{i-1}}\!\!\!\!p_e f(e).
\]
This shows inequality~\eqref{eq:one_min}.

Since the greedy phase of the algorithm selects a solution that that maximizes the ratio between the marginal increment in the objective function and the cost, the following holds:
\begin{itemize}
 \item For each $a\in A^*_1$, $\delta(A_{i-1}\cup\{a\}, S_{i-1} , A_{i-1},S_{i-1}) \leq \frac{\delta_i}{\bar{c}_i}$;
 \item For each $e=(a,v)\in S^*\setminus S_{i-1} $ such that \\$a\in A_{i-1}$, $\frac{\delta(A_{i-1}, S_{i-1}\cup\{e\}, A_{i-1},S_{i-1})}{c_e} \leq \frac{\delta_i}{\bar{c}_i}$;
 \item For each $a\in A^*_2$ and $e=(a,v)\in S^*\setminus S_{i-1}$,\\ $\frac{\delta(A_{i-1}\cup\{a\} S_{i-1}\cup\{e\}, A_{i-1},S_{i-1})}{1+c_e} \leq \frac{\delta_i}{\bar{c}_i}$.
\end{itemize}
Since the edge costs are at least $c_{\min}$, then the number of edges in $S^*\setminus S_{i-1}$ incident to each $a \in A^*_2$ is at most $\big\lfloor\frac{k}{c_{\min}}\big \rfloor \leq \frac{k}{c_{\min}}$. 
Therefore, the right hand side of~\eqref{eq:one_min} is at most:
\begin{align*}
 &\sum_{a\in A^*_1} \frac{\delta_i}{\bar{c}_i} 
 + \sum_{\begin{subarray}{c}e=(a,v)\in S^*\setminus S_{i-1}\\ \text{s.t. } a\in A_{i-1} \end{subarray}} \frac{\delta_i}{\bar{c}_i}c_e
  + \sum_{\begin{subarray}{c}\\a\in A^*_2\\ e=(a,v) \in S^*\setminus S_{i-1}\end{subarray}} \frac{\delta_i}{\bar{c}_i}(1+c_e) 
\\ &\leq  \frac{\delta_i}{\bar{c}_i} \left(|A^*_1| 
  + \sum_{\begin{subarray}{c}e=(a,v)\in S^*\setminus S_{i-1}\\ \text{s.t. } a\in A_{i-1} \end{subarray}} c_e 
+\frac{k}{c_{\min}}+ \sum_{\begin{subarray}{c}a\in A^*_2\\ e=(a,v)\in S^*\setminus S_{i-1}\end{subarray}}c_e\right)
\leq \left(1+\frac{1}{c_{\min}}\right)k \frac{\delta_i}{\bar{c}_i}.
\end{align*}
Where the last inequality is due to  $$\displaystyle|A^*_1| 
  + \sum_{\begin{subarray}{c}e=(a,v)\in S^*\setminus S_{i-1}\\ \text{s.t. } a\in A_{i-1} \end{subarray}} c_e + \sum_{\begin{subarray}{c}a\in A^*_2\\ e=(a,v)\in S^*\setminus S_{i-1}\end{subarray}}c_e \leq k.$$

To conclude the proof, $\delta_i =  \sigma(A_i,S_i) - \sigma(A_{i-1},S_{i-1})$ follows from Proposition~\ref{prop:delta}.
\end{proof}

\begin{lemma}\label{lem:induction_min}
 After each iteration $j_i$, $i=1,2,\ldots,l+1$, $$\sigma(A_i,S_i) \geq \left[ 1- \prod_{\ell=1}^i\left( 1 - \frac{\bar{c}_\ell}{k}\frac{c_{\min}}{(1+c_{\min})}\right) \right]\sigma(A^*,S^*).$$
\end{lemma}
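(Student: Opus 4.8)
The plan is to prove the claim by induction on $i$, using the single-step guarantee of Lemma~\ref{lem:marginalcosts_min} as the only engine. To keep the algebra transparent I would first abbreviate $\alpha_i := \frac{\bar{c}_i}{k}\frac{c_{\min}}{1+c_{\min}}$, so that Lemma~\ref{lem:marginalcosts_min} reads
$$\sigma(A_i,S_i) - \sigma(A_{i-1},S_{i-1}) \geq \alpha_i\bigl(\sigma(A^*,S^*) - \sigma(A_{i-1},S_{i-1})\bigr).$$
The natural first move is to rewrite this in the ``gap-contraction'' form that telescopes cleanly. Adding $\sigma(A_{i-1},S_{i-1})-\sigma(A^*,S^*)$ to both sides and rearranging yields
$$\sigma(A^*,S^*) - \sigma(A_i,S_i) \leq (1-\alpha_i)\bigl(\sigma(A^*,S^*) - \sigma(A_{i-1},S_{i-1})\bigr),$$
which says that each accepted greedy step shrinks the residual gap to the optimum by a factor $(1-\alpha_i)$. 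Proving the statement is then equivalent to establishing $\sigma(A^*,S^*)-\sigma(A_i,S_i) \leq \prod_{\ell=1}^{i}(1-\alpha_\ell)\,\sigma(A^*,S^*)$, which is just the negation of the asserted inequality.

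For the base case $i=1$, I would use that the algorithm starts from the empty solution, so $\sigma(A_0,S_0)=\sigma(\emptyset,\emptyset)=0$; then the contraction inequality directly gives $\sigma(A^*,S^*)-\sigma(A_1,S_1)\leq (1-\alpha_1)\sigma(A^*,S^*)$, i.e.\ $\sigma(A_1,S_1)\geq \alpha_1\sigma(A^*,S^*) = [1-(1-\alpha_1)]\sigma(A^*,S^*)$, matching the claim. For the inductive step, assuming $\sigma(A^*,S^*)-\sigma(A_{i-1},S_{i-1})\leq \prod_{\ell=1}^{i-1}(1-\alpha_\ell)\,\sigma(A^*,S^*)$, I would chain it with the contraction inequality:
$$\sigma(A^*,S^*) - \sigma(A_i,S_i) \leq (1-\alpha_i)\prod_{\ell=1}^{i-1}(1-\alpha_\ell)\,\sigma(A^*,S^*) = \prod_{\ell=1}^{i}(1-\alpha_\ell)\,\sigma(A^*,S^*),$$
which is precisely the desired bound after moving $\sigma(A_i,S_i)$ to the other side. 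Substituting back $\alpha_\ell=\frac{\bar{c}_\ell}{k}\frac{c_{\min}}{1+c_{\min}}$ recovers the product in the statement.

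Honestly, there is no genuine obstacle here: once Lemma~\ref{lem:marginalcosts_min} is available, this is the standard Khuller--Moss--Naor style unrolling, and the contraction form makes the induction essentially mechanical. The only points that deserve a word of care are (i) confirming that the recurrence is valid for every $i=1,\dots,l+1$, which holds because Lemma~\ref{lem:marginalcosts_min} is stated over exactly that range and each $1-\alpha_\ell$ lies in $[0,1]$ so the products behave monotonically, and (ii) the base-case identification $\sigma(A_0,S_0)=0$, which is what lets the telescope start cleanly. Everything else is the routine factoring shown above.
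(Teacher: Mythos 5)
Your proof is correct and takes essentially the same route as the paper: an induction on $i$ driven solely by Lemma~\ref{lem:marginalcosts_min}, where your ``gap-contraction'' inequality $\sigma(A^*,S^*)-\sigma(A_i,S_i)\leq(1-\alpha_i)\left(\sigma(A^*,S^*)-\sigma(A_{i-1},S_{i-1})\right)$ is just an algebraic rephrasing of the paper's direct step $\sigma(A_i,S_i)\geq\sigma(A_{i-1},S_{i-1})(1-\alpha_i)+\alpha_i\sigma(A^*,S^*)$. The base case and inductive step match the paper's, so there is nothing further to reconcile.
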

\begin{proof}
 We show the statement by induction on iterations $j_i$.
 For $i=1$, by Lemma~\ref{lem:marginalcosts_min}, $\sigma(A_1,S_1)  \geq \frac{\bar{c}_1}{k}\frac{c_{\min}}{(1+c_{\min})} \sigma(A^*,S^*) =\left[ 1- \left( 1 - \frac{\bar{c}_1}{k}\frac{c_{\min}}{(1+c_{\min})}\right) \right]\sigma(A^*,S^*)$. Let us assume that the statement holds for $j_1,j_2,\ldots,j_{i-1}$, then 
 \begin{align*}
 \!\!\!\!\!\!\!\!\! \sigma(A_i,S_i) &= \sigma(A_{i-1},S_{i-1}) + \left[ \sigma(A_i,S_i) - \sigma(A_{i-1},S_{i-1})\right]\\
                  &\geq \sigma(A_{i-1},S_{i-1}) + \frac{\bar{c}_i}{k}\frac{c_{\min}}{(1+c_{\min})}\left[ \sigma(A^*,S^*) - \sigma(A_{i-1},S_{i-1})\right]\\
                  &= \sigma(A_{i-1},S_{i-1}) \left(1-\frac{\bar{c}_i}{k}\frac{c_{\min}}{(1+c_{\min})}\right) + \frac{\bar{c}_i}{k}\frac{c_{\min}}{(1+c_{\min})}\sigma(A^*,S^*)\\
                  &\geq \left[ 1- \prod_{\ell=1}^{i-1}\left( 1 - \frac{\bar{c}_\ell}{k}\frac{c_{\min}}{(1+c_{\min})}\right) \right]\left(1-\frac{\bar{c}_i}{k}\frac{c_{\min}}{(1+c_{\min})}\right)\sigma(A^*,S^*)\\ &\quad\quad\quad\quad\quad\quad\quad\quad\quad\quad\quad\quad\quad\quad\quad\quad\quad\quad+\frac{\bar{c}_i}{k}\frac{c_{\min}}{(1+c_{\min})}\sigma(A^*,S^*)\\
                  &=\left[ 1- \prod_{\ell=1}^i\left( 1 - \frac{\bar{c}_\ell}{k}\frac{c_{\min}}{(1+c_{\min})}\right) \right]\sigma(A^*,S^*),
 \end{align*}
where the two inequalities follow from Lemma~\ref{lem:marginalcosts_min} and the inductive hypothesis, resp.
\end{proof}

\begin{theorem}\label{lem:final_mincost}
 Algorithm~\ref{alg:greedy} achieves an approximation factor of $$\frac{1}{2}\left(1-\frac{1}{e^{\frac{c_{\min}}{1+c_{\min}}}}\right)\sigma(A^*,S^*).$$
\end{theorem}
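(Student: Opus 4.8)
The plan is to combine the per-iteration progress bound of Lemma~\ref{lem:induction_min} with the budget-exhaustion argument of Khuller--Moss--Naor, and then to use the second candidate solution computed at line~\ref{alg:greedy:max} to absorb the contribution of the single element that first overflows the budget. Write $\alpha = \frac{c_{\min}}{1+c_{\min}}$ for brevity. I would first instantiate Lemma~\ref{lem:induction_min} at $i=l+1$, where $j_{l+1}$ is the first iteration at which an optimal element is selected by the greedy rule but rejected because it violates the budget (treating $(A_{l+1},S_{l+1})$ as the hypothetical solution obtained by adding that element). By definition of $j_{l+1}$, the total marginal cost spent through iterations $j_1,\dots,j_l$ is at most $k$ while adding the $(l+1)$-st element would exceed $k$, so $\sum_{\ell=1}^{l+1}\bar c_\ell > k$.

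The next step is a purely analytic estimate of the product $\prod_{\ell=1}^{l+1}\left(1-\frac{\bar c_\ell}{k}\alpha\right)$. Using the standard fact that, subject to a fixed value of $\sum_{\ell}\frac{\bar c_\ell}{k}\alpha$, this product is maximized when all factors are equal, together with $\left(1-\frac{x}{n}\right)^n\le e^{-x}$ and the bound $\sum_{\ell=1}^{l+1}\frac{\bar c_\ell}{k}\alpha > \alpha$, I get $\prod_{\ell=1}^{l+1}\left(1-\frac{\bar c_\ell}{k}\alpha\right) < e^{-\alpha}$. Plugging this into Lemma~\ref{lem:induction_min} yields $\sigma(A_{l+1},S_{l+1}) \ge \left(1-\frac{1}{e^{\alpha}}\right)\sigma(A^*,S^*)$.

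It remains to pass from the hypothetical solution $(A_{l+1},S_{l+1})$ back to the two candidates the algorithm actually returns. I would write $\sigma(A_{l+1},S_{l+1}) = \sigma(A_l,S_l) + \delta_{l+1}$, where $\delta_{l+1}$ is the marginal gain of the overflowing element (which equals the recorded gain by Proposition~\ref{prop:delta}). The overflowing element is either a seed, an edge incident to a current seed, or a seed--edge pair; in every case, monotonicity and submodularity of $\sigma$ (which follows from its live-edge/coverage characterization given in Section~\ref{sec:definition}) bound its marginal gain by its standalone value, and that standalone value is itself at most $\sigma(\{a_M\},\{(a_M,v_M)\})$, the best single-node single-edge solution found at line~\ref{alg:greedy:max}. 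Since $(A_l,S_l)$ is contained in the final greedy solution $(A,S)$, monotonicity gives $\sigma(A_l,S_l)\le\sigma(A,S)$, so $\left(1-\frac{1}{e^{\alpha}}\right)\sigma(A^*,S^*)\le \sigma(A,S)+\sigma(\{a_M\},\{(a_M,v_M)\}) \le 2\max\{\sigma(A,S),\sigma(\{a_M\},\{(a_M,v_M)\})\}$. As line~\ref{alg:greedy:return} returns the better of the two candidates, the claimed factor $\frac12\left(1-\frac{1}{e^{\alpha}}\right)$ follows.

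The main obstacle I anticipate is the last step, namely cleanly bounding $\delta_{l+1}$ by a quantity the algorithm provably evaluates. The edge-only case is the delicate one: the overflowing edge $e=(a,v)$ depends on having its tail $a$ already in the current seed set, so I must invoke submodularity to transfer its marginal gain to the two-element standalone solution $(\{a\},\{e\})$ rather than to the edge alone, and then compare with the maximum taken at line~\ref{alg:greedy:max}. Getting this reduction right for all three element types, and making sure $\delta_{l+1}$ is exactly the marginal gain recorded by the greedy rule so that Proposition~\ref{prop:delta} applies, is where the care is needed.
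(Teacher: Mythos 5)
Your proposal is correct and follows essentially the same route as the paper's proof: instantiate Lemma~\ref{lem:induction_min} at $i=l+1$, use $\sum_{\ell=1}^{l+1}\bar{c}_\ell = c(A_{l+1},S_{l+1})>k$ together with the standard product estimate to get $\sigma(A_{l+1},S_{l+1})\geq\bigl(1-e^{-\frac{c_{\min}}{1+c_{\min}}}\bigr)\sigma(A^*,S^*)$, split off $\delta_{l+1}$ via Proposition~\ref{prop:delta}, bound it by $\sigma(\{a_M\},\{(a_M,v_M)\})$, and lose the factor $\tfrac{1}{2}$ by returning the better of the two candidates. Your extra care in justifying $\delta_{l+1}\leq\sigma(\{a_M\},\{(a_M,v_M)\})$ (notably the edge-only overflow case, handled via monotonicity and submodularity) is sound and in fact fills in a step the paper asserts without proof.
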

\begin{proof}
We observe that since $(A_{l+1},S_{l+1})$ violates the budget, then\\ $c(A_{l+1},S_{l+1})>k$. Moreover, for a sequence of numbers $a_1,a_2,\ldots,a_n$ such that $\sum_{\ell=1}^n a_\ell = B$, the function $\left[ 1 - \prod_{i=1}^n \left(1 -\frac{a_i}{B\cdot \beta}\right)\right]$ achieves its minimum when $a_i=\frac{B}{n}$ and that 
$
\left[ 1 - \prod_{i=1}^n \left(1 -\frac{a_i}{B\cdot \beta}\right)\right]\geq 1-\left(1-\frac{1}{n\cdot \beta}\right)^n\geq 1-e^{-\frac{1}{\beta}}.
$
Therefore, by applying Lemma~\ref{lem:induction_min} for $i=l+1$ and observing that $\sum_{\ell=1}^{l+1}\bar{c}_\ell=c(A_{l+1},S_{l+1})$, we obtain:
\begin{align*}
 \sigma(A_{l+1},S_{l+1}) & \geq \left[ 1- \prod_{\ell=1}^{l+1}\left( 1 - \frac{\bar{c}_\ell}{k\left(\frac{1+c_{\min}}{c_{\min}}\right)}\right) \right]\sigma(A^*,S^*)\\
                         & \geq \left[ 1- \prod_{\ell=1}^{l+1}\left( 1 - \frac{\bar{c}_\ell}{c(A_{l+1},S_{l+1})\left(\frac{1+c_{\min}}{c_{\min}}\right)}\right) \right]\sigma(A^*,S^*)\\
                         & \geq \left[ 1- \left( 1 - \frac{1}{(l+1)\frac{1+c_{\min}}{c_{\min}}}\right)^{l+1} \right]\sigma(A^*,S^*)\\
                          &\geq \left(1-\frac{1}{e^{\frac{c_{\min}}{1+c_{\min}}}}\right) \sigma(A^*,S^*).
\end{align*}
By Proposition~\ref{prop:delta}, it follows that: 
\begin{align}
\sigma(A_{l+1},S_{l+1}) = \sigma(A_{l},S_{l}) + \delta_{l+1}\geq \left(1-\frac{1}{e^{\frac{c_{\min}}{1+c_{\min}}}}\right) \sigma(A^*,S^*).\label{eq:approx}
\end{align}
 Since $\delta_{l+1} \leq \sigma(\{a_M\}, \{(a_{M},v_{M})\})$, we get
$$\sigma(A_{l},S_{l}) + \sigma(\{a_M\}, \{(a_{M},v_{M})\})\geq \left(1-\frac{1}{e^{\frac{c_{\min}}{1+c_{\min}}}}\right) \sigma(A^*,S^*).$$ Hence, $\max\{ \sigma(A_l,S_l), \sigma(\{a_M\}, \{(a_{M},v_{M})\}) \}\geq \frac{1}{2}\left(1-\frac{1}{e^{\frac{c_{\min}}{1+c_{\min}}}}\right) \sigma(A^*,S^*)$.
\end{proof}

\begin{algorithm2e}[t]
\caption{}
\label{alg:greedy_2}
\SetKwInput{Proc}{Algorithm}
\SetKwInOut{Input}{Input}
\SetKwInOut{Output}{Output}
\Input{A directed graph $G=(V,E)$, integer $M\in\mathbb{N}$ and an integer $k\in\mathbb{N}$}
\Output{A set of nodes $A$ and a of edges $S\subseteq (A\times V)\setminus E$ such that $c(A,S)\leq k$}
$(A_1, S_1) : = \arg\max\{ \sigma(A,S): |A|+|S|< M, c(A, S)\leq k \}$\;\label{alg:greedy_2:first}
$A_2:=\emptyset$; $S_2:=\emptyset$;
$U:=V$;
$T:=(V\times V)\setminus E$\;
\ForEach{$A\subseteq U, S\subseteq (A\times V)\setminus E \text{\rm ~such that }|A|+|S|= M \text{\rm ~and } c(A, S)\leq k $}
{ 
$U := U\setminus A$;
$T := T \setminus S$\;
Complete $(A,S)$ by using Algorithm~\ref{alg:greedy} with $U$ and $T$ as possible nodes and edges\;
\If{$\sigma(A, S)>\sigma(A_2, S_2)$}
{
	$A_2 := A$\;
	$S_2 := S$\;	
}
}
\Return $\arg\max\{ \sigma(A_1,S_1), \sigma(A_2, S_2)\}$\;
\end{algorithm2e}
We now propose an algorithm which improves the performance guarantee of Algorithm~\ref{alg:greedy}. 
Let $M$ a fixed integer, we consider all the solutions $(A,S)$ with cardinality $M$ (i.e. $|A|+|S|=M$) and cost at most $k$ ($c(A, S)\leq k$), and we complete all these solutions by using the greedy algorithm. The pseudocode is reported in Algorithm~\ref{alg:greedy_2}.

\begin{theorem}\label{th:enumeration}
 For $M\geq 4$ Algorithm~\ref{alg:greedy_2} achieves an approximation factor of  $1-\frac{1}{e^{\frac{c_{\min}}{1+c_{\min}}}}$.
\end{theorem}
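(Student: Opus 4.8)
The plan is to use the standard partial-enumeration technique (in the spirit of Khuller--Moss--Naor and Sviridenko) to remove the factor $\frac12$ lost in Theorem~\ref{lem:final_mincost}. That loss came entirely from discarding the single element $\delta_{l+1}$ that overflows the budget and comparing against the best single node-edge pair. Enumerating all cardinality-$M$ starting solutions lets us instead assume that the $M$ highest-contribution elements of an optimal solution have already been bought, so that the overflow element is provably small and the feasible greedy solution alone already meets the bound.

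First I would dispose of the easy case: if an optimal solution $(A^*,S^*)$ has $|A^*|+|S^*|<M$, then line~\ref{alg:greedy_2:first} of Algorithm~\ref{alg:greedy_2} enumerates it exactly among all solutions of cardinality ${<}M$ and cost ${\le}k$, so $(A_1,S_1)$ is optimal. Hence assume $|A^*|+|S^*|\ge M$. I would order the atoms (nodes and edges) of $(A^*,S^*)$ in greedy order---each next atom maximizing the marginal gain over the current prefix, respecting the constraint that an edge is taken only after its tail---and let $Y=(A_Y,S_Y)$ be the first $M$ atoms. Then $Y$ is a feasible partial solution with $|A_Y|+|S_Y|=M$, $S_Y\subseteq(A_Y\times V)\setminus E$ and $c(Y)\le c(A^*,S^*)\le k$, so $Y$ is one of the starting solutions fed to the greedy completion.

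Next I would re-run the analysis of Lemmas~\ref{lem:marginalcosts_min} and~\ref{lem:induction_min} for the completion of $Y$, i.e.\ taking $(A_0,S_0)=Y$. The proof of Lemma~\ref{lem:marginalcosts_min} uses only ratio-optimality of each greedy step and the fact that the compared optimal elements lie outside the current solution; since every completion step still maximizes the ratio over the restricted ground sets $U,T$ (from which $Y$ has been removed) and $Y\subseteq(A^*,S^*)$, the per-step inequality $\sigma(A_i,S_i)-\sigma(A_{i-1},S_{i-1})\ge\frac{\bar c_i}{k}\frac{c_{\min}}{1+c_{\min}}\bigl(\sigma(A^*,S^*)-\sigma(A_{i-1},S_{i-1})\bigr)$ carries over verbatim, the denominator still using the full budget $k$. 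Solving this recurrence with the nonzero base value $\sigma(Y)$ gives the analogue of Lemma~\ref{lem:induction_min}, namely $\sigma(A^*,S^*)-\sigma(A_i,S_i)\le\bigl(\sigma(A^*,S^*)-\sigma(Y)\bigr)\prod_{\ell=1}^i\bigl(1-\frac{\bar c_\ell}{k}\frac{c_{\min}}{1+c_{\min}}\bigr)$.

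Finally comes the crux. Let $j_{l+1}$ be the first completion step whose element overflows the budget; as in Theorem~\ref{lem:final_mincost} one has $\sum_{\ell=1}^{l+1}\bar c_\ell>k-c(Y)$, and minimizing the product yields a bound on $\sigma(A_{l+1},S_{l+1})$. The new ingredient is that the overflow element is an atom of $(A^*,S^*)$ lying \emph{outside} the guessed prefix $Y$; because $Y$ collects the $M$ largest marginal contributions, this element's marginal over $(A_l,S_l)\supseteq Y$ is, by submodularity and the greedy ordering, at most the smallest marginal inside $Y$, hence an $O(1/M)$ fraction of $\sigma(Y)\le\sigma(A^*,S^*)$. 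I would then combine this smallness of $\delta_{l+1}$ with the base term $\sigma(Y)$ already present in the recurrence so that the feasible solution $(A_l,S_l)$---which is exactly the candidate $(A_2,S_2)$ recorded for this guess---already satisfies $\sigma(A_2,S_2)\ge\bigl(1-e^{-c_{\min}/(1+c_{\min})}\bigr)\sigma(A^*,S^*)$, with no factor $\tfrac12$. I expect this last bookkeeping to be the main obstacle: one must simultaneously control the slack $c(Y)/k$ in the exponent and the base value $\sigma(Y)$, and check that an overflow step may consist of a node \emph{and} an edge, so that $M\ge4$ pre-selected atoms are exactly what is needed to make the overflow contribution negligible enough for the constant to come out as $1-e^{-c_{\min}/(1+c_{\min})}$.
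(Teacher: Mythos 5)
Your overall strategy is the paper's own: enumerate the first $M$ atoms of $(A^*,S^*)$ in greedy order, complete greedily, and kill the overflow term via $\delta_{l+1}\le \frac{2}{M}\sigma(Y)$ (your factor $2$ for a node-plus-edge overflow step and the resulting requirement $M\ge 2e^{c_{\min}/(1+c_{\min})}<4$ match the paper exactly, as does the easy case $|A^*|+|S^*|\le M$). The genuine gap sits at what you yourself call the crux. You run the recurrence of Lemma~\ref{lem:induction_min} from the base value $\sigma(Y)$ but keep the per-step factor $\frac{\bar c_i}{k}\frac{c_{\min}}{1+c_{\min}}$ with the \emph{full} budget $k$; the overflow condition in the completion only gives $\sum_{\ell=1}^{l+1}\bar c_\ell > k-c(Y)$, so, writing $\beta=\frac{c_{\min}}{1+c_{\min}}$, your product is only bounded by $e^{-\beta\left(1-c(Y)/k\right)}$. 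When $c(Y)$ is a constant fraction of $k$ this exponent is strictly weaker than $\beta$, and nothing in your argument forces $\sigma(Y)$ to be large enough to compensate (the $M$ top atoms may carry a small share of $\sigma(A^*,S^*)$ while costing almost the whole budget), so the claimed bound $\left(1-e^{-\beta}\right)\sigma(A^*,S^*)$ does not follow as stated. You flag this slack as ``the main obstacle'' but leave it unresolved, and the proof is incomplete exactly there.

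The missing observation, which is how the paper closes the loop, is that during the completion the current solution always contains $Y\subseteq(A^*,S^*)$, so the optimal elements still outside the current solution have total cost at most $k-c(Y)$; hence the counting in the proof of Lemma~\ref{lem:marginalcosts_min} goes through with $k-c(Y)$ in place of $k$. The paper phrases this as applying inequality~\eqref{eq:approx} to the reduced instance obtained by removing what $Z$ covers, with benchmark $\sigma(A^*\setminus A_Z,S^*\setminus S_Z)$. Since the overflow exceeds the same reduced budget $k-c(Y)$, the clean exponent $\beta$ is restored with no slack; combined with the superadditivity $\sigma(A_Z,S_Z)+\sigma(A^*\setminus A_Z,S^*\setminus S_Z)\ge\sigma(A^*,S^*)$ --- or, in your formulation, with the base term $e^{-\beta}\sigma(Y)$ carried through the recurrence --- the final bookkeeping reduces to $\left(e^{-\beta}-\frac{2}{M}\right)\sigma(Y)\ge 0$, which holds for $M\ge 4$. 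With that single correction your outline becomes essentially the paper's proof.
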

\begin{proof}
We assume that $|A^*|+|S^*|>M$ since otherwise Algorithm~\ref{alg:greedy_2} finds an optimal solution.
We sort the elements in $(A^*, S^*)$ by selecting at each step the element, which can be either a seed or an edge, that maximizes the marginal increment in the number of influenced nodes.
Let $Z=(A_Z, S_Z)$ be the first $M$ elements in this order. We now consider the iteration of Algorithm~\ref{alg:greedy_2} in which element $Z$ is considered. We define $(A_{Z'},S_{Z'})$ as the elements added by the algorithm to $(A_Z,S_Z)$ and $(A, S)=(A_Z \cup A_{Z'}, S_Z \cup S_{Z'})$. By Proposition~\ref{prop:delta}, it follows that
$
\sigma(A, S)=\sigma(A_Z, S_Z)+\delta(A_Z \cup A_{Z'}, S_Z \cup S_{Z'}, A_Z, S_Z).
$

The completion of $(A_Z, S_Z)$ to $(A,S)$ is an application of the greedy algorithm and therefore, we can use the result from the previous theorems.  
Let us consider the iterations executed by the greedy algorithm during the completion of $(A_Z, S_Z)$ to $(A,S)$. For $i\geq 1$, let us denote by $j_i$ the index of these iterations, $j_i<j_{i+1}$, and let $j_{l+1}$ be the index of the first iteration in which an element in $(A^*\setminus A_Z,S^* \setminus S_Z)$ is considered but not added to $(A_{Z'}, S_{Z'})$ because it violates the budget constraint. Applying inequality~\eqref{eq:approx} to the instance of the problem obtained removing the nodes covered by solution $Z$ we get:
$
\delta(A_Z \cup A_{Z'}, S_Z \cup S_{Z'}, A_Z, S_Z) + \delta_{l+1}\geq \left(1-\frac{1}{e^{\frac{c_{\min}}{1+c_{\min}}}}\right) \sigma(A^*\setminus A_Z,S^* \setminus S_Z).
$
%
Moreover, since we ordered the elements in $(A^*,S^*)$ and in iteration $j_{l+1}$ and most 2 elements are selected, then $\delta_{l+1}\leq \frac{2\sigma(A_Z, S_Z)}{M}$ and
\begin{align*}
\sigma(A, S)&=\sigma(A_Z, S_Z)+\delta(A_Z \cup A_{Z'}, S_Z \cup S_{Z'}, A_Z, S_Z)\\
 & \geq \sigma(A_Z, S_Z)+ \left(1-\frac{1}{e^{\frac{c_{\min}}{1+c_{\min}}}}\right) \sigma(A^*\setminus A_Z,S^* \setminus S_Z) -\delta_{l+1}\\
& \geq \sigma(A_Z, S_Z)+ \left(1-\frac{1}{e^{\frac{c_{\min}}{1+c_{\min}}}}\right) \sigma(A^*\setminus A_Z,S^* \setminus S_Z) -\frac{2\sigma(A_Z, S_Z)}{M}\\
&\geq \left(1-\frac{2}{M}\right) \sigma(A_Z, S_Z)+ \left(1-\frac{1}{e^{\frac{c_{\min}}{1+c_{\min}}}}\right) \sigma(A^*\setminus A_Z,S^* \setminus S_Z)
\end{align*}
But, $\sigma(A_Z, S_Z)+ \sigma(A^*\setminus A_Z,S^* \setminus S_Z)\geq \sigma(A^*, S^*)$, and we get:
\[
\sigma(A, S)\geq  \left(1-\frac{1}{e^{\frac{c_{\min}}{1+c_{\min}}}}\right) \sigma(A^*,S^*)+ \left(\frac{1}{e^{\frac{c_{\min}}{1+c_{\min}}}}-\frac{2}{M}\right) \sigma(A_Z,S_Z)\]
$$\geq  \left(1-\frac{1}{e^{\frac{c_{\min}}{1+c_{\min}}}}\right) \sigma(A^*,S^*),
$$
 for $M\geq 2e^{\frac{c_{\min}}{1+c_{\min}}}$. Since $2e^{\frac{c_{\min}}{1+c_{\min}}} < 4$ for $c_{\min}\in [0,1]$, the theorem follows.
\end{proof}

\section{General case}\label{sec:algo}

In this section we introduce an approximation algorithm for the \MICM problem that guarantees a constant approximation ratio, i.e. it does not depend on the edge costs.

The algorithm is similar to Algorithm~\ref{alg:greedy}, that is it chooses the best between two candidate solutions: one solution is made of a single node $a_M$ and a single edge $(a_M,v_M)$ that maximizes $\sigma(\{a_M\},\{(a_M,v_M)\})$, the other solution is found by means of a greedy algorithm that at each step maximizes the ratio between the marginal increment in the objective function of a candidate solution and its marginal cost. The main difference consists in the way in which the greedy algorithm computes a candidate solution. In fact, the algorithm presented here might select more than one edge at one time. In particular, at each iteration the greedy algorithm computes four candidate solutions starting from the current solution $(A',S')$. The first two solutions correspond to cases $r_1$ and $r_2$ of Algorithm~\ref{alg:greedy}, and differ from $(A',S')$ by a single seed node or a single edge incident to a seed in $A'$, respectively. To compute the two further candidate solutions, the algorithm divides the edges into two sets: those whose cost is at least $b$ and those whose cost is smaller than $b$, for some given constant $b$, and proceeds as follows.

\begin{itemize}
 \item Select a seed node ${a}$ not in $A'$ and an edge $({a},{v})$, which cost is \\$c_{({a}, {v})}\geq b$, incident to ${a}$ that maximize $\frac{\delta(A'\cup \{{a}\},S'\cup\{({a},{v})\},A',S')}{1+c_{({a},{v})}}$;
 \item Select a seed node ${a}$ not in $A'$ and a set ${S}$ of edges incident to ${a}$, whose overall cost is smaller than $b$ (i.e. $\sum_{({a},{v}) \in {S}}{c_{({a},{v})}}<b$), that maximize $\frac{\delta(A'\cup \{{a}\},S'\cup {S},A',S')}{1+\sum_{({a},{v}) \in {S}}{c_{({a}, {v})}}}$.
\end{itemize}
The greedy phase of the algorithm selects the candidate solution that gives the maximum ratio among the above four possibilities.

To compute the fourth candidate solution we need to solve an optimization problem, which we call R\MICM (where R stands for ratio). In the following, we assume that we can exploit an $\alpha$-approximation algorithm for this sub-problem. At the end of this section we will introduce a suitable algorithm for R\MICM.

The analysis of the algorithm is similar to that in the previous section. In particular, the next lemma is the core of the analysis and corresponds to Lemma~\ref{lem:marginalcosts_min}. We use the same notation used in the previous section. We denote by $j_i$ an iteration of the greedy algorithm in which an element is added to the solution, by $j_{l+1}$ the first iteration in which an element of the optimum is not added to the solution and by $\bar{c}_i$ the marginal cost of the solution $(A_i,S_i)$ computed at the iteration $j_i$.

\begin{lemma}\label{lem:marginalcosts}
 After each iteration $j_i$, $i=1,2,\ldots,l+1$, $$\sigma(A_i,S_i) - \sigma(A_{i-1},S_{i-1}) \geq \frac{\bar{c}_i}{k}\frac{b\alpha}{b\alpha+b+\alpha +2} (\sigma(A^*,S^*) - \sigma(A_{i-1},S_{i-1})).$$
\end{lemma}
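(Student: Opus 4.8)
The plan is to mirror the proof of Lemma~\ref{lem:marginalcosts_min}, reusing verbatim the passage from $\sigma(A^*,S^*)-\sigma(A_{i-1},S_{i-1}) \le \sigma(A^*\cup A_{i-1}, S^*\cup S_{i-1}) - \sigma(A_{i-1},S_{i-1})$ through the live-edge expansion that culminates in~\eqref{eq:two_min}, and then to produce an analogue of the decomposition~\eqref{eq:one_min} tailored to the \emph{four} greedy options. As before I would split $A^*\setminus A_{i-1}$ into $A^*_1$ (new seeds of $S^*$ with no outgoing edge in $S^*$, charged to $r_1$) and $A^*_2$ (the remaining new seeds), and charge the edges of $S^*$ outgoing from the old seeds $A_{i-1}$ to $r_2$. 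The genuinely new ingredient concerns the edges of $S^*$ outgoing from $A^*_2$: I would partition them, seed by seed, into \emph{expensive} edges ($c_e \ge b$) and \emph{cheap} edges ($c_e < b$), charge each expensive edge individually to $r_3$, and group the cheap edges of each seed into bundles $I_1,I_2,\dots$ of total cost strictly below $b$, charging each bundle to $r_4$.

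The first key step is a grouped analogue of~\eqref{eq:three_min}. Fixing $X \in \chi(S_{i-1})$ and $Y \in \chi(S^*\cup S_{i-1},X)$, I would charge each node counted by $|R(A^*\cup A_{i-1},Y)| - |R(A_{i-1},X)|$ to the tail seed $a$ and to the bundle $I_j$ (respectively the expensive edge $e$) that contains the \emph{last} live edge of $Y\setminus X$ on some path from $A^*\cup A_{i-1}$ to that node. Since $a$ is itself a seed and the remaining arcs of the path lie in $X$, monotonicity of reachability gives that the node belongs to $R(A_{i-1}\cup\{a\}, X\cup I_j)$, so the bundle term $|R(A_{i-1}\cup\{a\},X\cup I_j)| - |R(A_{i-1},X)|$ dominates the charge; summing over seeds, expensive edges and bundles therefore upper-bounds the reachability gain. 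Substituting this into~\eqref{eq:two_min} and taking expectations exactly as in the derivation of~\eqref{eq:one_min} turns the live-edge bundle terms into $\delta(A_{i-1}\cup\{a\}, S_{i-1}\cup I_j, A_{i-1}, S_{i-1})$ and the expensive-edge terms into $\delta(A_{i-1}\cup\{a\}, S_{i-1}\cup\{e\}, A_{i-1}, S_{i-1})$.

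Next I would invoke the selection rule of the greedy phase. Because $r_1,r_2,r_3$ are computed exactly, every seed charge, every old-edge charge, and every expensive-edge charge has ratio at most $\frac{\delta_i}{\bar c_i}$; because $r_4$ relies only on an $\alpha$-approximation for R\MICM and every (seed, cheap-bundle) pair of cost below $b$ is feasible for that subproblem, each bundle charge has ratio at most $\frac1\alpha\frac{\delta_i}{\bar c_i}$. It then remains to control the unit seed-costs hidden in the denominators $1+c_e$ and $1+\sum_{e\in I_j}c_e$. The number of expensive edges is at most $k/b$ since each costs at least $b$. For the cheap bundles I would use the observation that whenever two bundles of the same seed each have cost below $b/2$ they merge into one feasible bundle; hence each seed retains at most one bundle of cost below $b/2$, so the number of cheap bundles is at most $|A^*_2|$ plus $\frac{2}{b}$ times the total cheap-edge cost. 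This merging argument, and the resulting factor $2$, is the main obstacle.

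Finally I would assemble the bound. The charges whose ratio carries coefficient $1$ — the $|A^*_1|$ seeds, the edges of $S^*$ into $A_{i-1}$, and the expensive-edge costs — contribute at most $k\,\frac{\delta_i}{\bar c_i}$ by the budget constraint $c(A^*,S^*)\le k$; the charges carrying coefficient $\frac1\alpha$ — the $A^*_2$ seeds and the cheap-edge costs — contribute at most $\frac k\alpha\,\frac{\delta_i}{\bar c_i}$; and the at most $k/b$ expensive edges together with the at most $\frac{2k}{b}$ extra cheap bundles contribute the residual terms $\frac kb\,\frac{\delta_i}{\bar c_i}$ and $\frac{2k}{\alpha b}\,\frac{\delta_i}{\bar c_i}$. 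Summing these four contributions gives $\sigma(A^*,S^*) - \sigma(A_{i-1},S_{i-1}) \le \frac{b\alpha+b+\alpha+2}{b\alpha}\,k\,\frac{\delta_i}{\bar c_i}$, and rearranging, together with $\delta_i = \sigma(A_i,S_i)-\sigma(A_{i-1},S_{i-1})$ from Proposition~\ref{prop:delta}, yields the claimed inequality.
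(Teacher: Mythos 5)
Your proposal is correct and follows essentially the same route as the paper's proof: the same split of $A^*\setminus A_{i-1}$ into $A^*_1$ and $A^*_2$, the same expensive/cheap dichotomy at threshold $b$ with per-seed bundles of cost below $b$ charged to the $\alpha$-approximate fourth greedy option, the same counts (at most $k/b$ expensive edges; your merging observation that two bundles of cost below $b/2$ coalesce is just the bin-packing argument the paper uses to bound the number of bundles by $2k/b$), and the same final assembly giving $\left(1+\frac{1}{b}+\frac{1}{\alpha}+\frac{2}{b\alpha}\right)k\,\frac{\delta_i}{\bar{c}_i}$. One slip to correct: in your grouped analogue of~\eqref{eq:three_min} the bundle term must be $|R(A_{i-1}\cup\{a\}, X\cup (I_j\cap Y))| - |R(A_{i-1},X)|$, not $|R(A_{i-1}\cup\{a\}, X\cup I_j)| - |R(A_{i-1},X)|$; adding the whole bundle deterministically gives a quantity that upper-bounds $\delta(A_{i-1}\cup\{a\}, S_{i-1}\cup I_j, A_{i-1},S_{i-1})$ rather than equals it after averaging, which is the wrong direction for the chain of inequalities. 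Your own ``last live edge'' charging already proves the corrected form, since it places the node in $R(A_{i-1}\cup\{a\}, X\cup\{e\})$ with $e\in I_j\cap Y$ live, and averaging over $Y$ then yields exactly the $\delta$ terms, as in the paper (whose corresponding display indeed uses $X\cup(S\cap Y)$).
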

\begin{proof}
We denote by $\delta_i$ the expected number of nodes affected by solution $(A_i,S_i)$ and not affected by solution $(A_{i-1},S_{i-1})$, $\delta_i=\delta(A_i,S_i,A_{i-1},S_{i-1})$.
Moreover, we divide the set $A^*\setminus A_{i-1}$ into two subsets: $A^*_1$ is the subset of $A^*\setminus A_{i-1}$ that contains the seeds $a$ with no incident edges in $S^*$, and $A^*_2= A^*\setminus (A_{i-1}\cup A^*_1)$. For each seed $a\in A^*_2$ let us consider the subset of $S^*$ containing edges incident to $a$ such that $c_e < b$. We partition this set into sets of edges whose overall cost is smaller than $b$ in such a way that the number of sets in the partition is minimized. We denote by $\mathcal{S}_a$ this partition and observe that the overall number of sets in $\mathcal{S}_a$, for all $a\in A^*_2$, is at most $2\frac{k}{b}$ as it is equivalent to the minimum number of bins of size $b$ needed to pack a set of items of overall size $k$.

By using similar arguments as in Lemma~\ref{lem:marginalcosts_min}, we can show that 
\begin{align}
\sigma(A^*,S^*) - \sigma(A_{i-1},S_{i-1}) &\leq 
 \sum_{ a\in A^*_1} \delta(A_{i-1}\cup\{a\}, S_{i-1} , A_{i-1},S_{i-1})\nonumber\\
 &+ \sum_{\begin{subarray}{c}e=(a,v)\in S^*\setminus S_{i-1}\\ \text{s.t. } a\in A_{i-1} \end{subarray} } \delta(A_{i-1}, S_{i-1}\cup\{e\}, A_{i-1},S_{i-1})\label{eq:one}\\
&\!\!\!\!\!\!+\sum_{\begin{subarray}{c}a\in A^*_2\\ e=(a,v)\in S^*\setminus S_{i-1}\\c_e \geq b \end{subarray} }\!\!\!\!\!\!\!\!\!\!\!\!\!\! \delta(A_{i-1}\cup\{a\}, S_{i-1}\cup\{e\}, A_{i-1},S_{i-1}) \nonumber \\
&+\! \sum_{\begin{subarray}{c}a\in A^*_2\\ S\in \mathcal{S}_a \end{subarray} } \delta(A_{i-1}\cup\{a\}, S_{i-1}\cup S,A_{i-1},S_{i-1}).\nonumber
\end{align}
Indeed,
     $$\sigma(A^*,S^*) - \sigma(A_{i-1},S_{i-1}) \leq$$ $$\sum_{X \in \chi(S_{i-1})}\mathbb{P}[X]\sum_{Y\in\chi(S^*\cup S_{i-1},X)}\mathbb{P}[Y|X]( |R(A^*\cup A_{i-1}, Y)| - |R(A_{i-1}, X)|)$$
and for each $X \in \chi(S_{i-1})$ and $Y\in\chi(S^*\cup S_{i-1},X)$,
\begin{align*}
\!\!\!\!\!\!\!\!\!|R(A^*\cup A_{i-1}, Y)| - |R(A_{i-1}, X)| &\leq \sum_{a\in A^*_1}\left( |R(A_{i-1}\cup\{a\}, X)| - |R(A_{i-1}, X)|\right)\\ 
&\!\!\!\!\!\!\!\!\!+\sum_{\begin{subarray}{c}e=(a,v)\in Y\setminus X\\ \text{s.t. } a\in A_{i-1} \end{subarray}} \left(|R(A_{i-1}, X\cup\{e\})| - |R(A_{i-1}, X)| \right)\\
&\!\!\!\!\!\!\!\!\!+\sum_{\begin{subarray}{c}a\in A^*_2\\ e=(a,v)\in  Y\setminus X\\c_e \geq b\end{subarray}}\!\!\!\!\!\!\!\!\!\! \left(|R(A_{i-1}\cup\{a\}, X\cup\{e\})| - |R(A_{i-1}, X)| \right)\\
&+\sum_{\begin{subarray}{c}a\in A^*_2\\ S\in\mathcal{S}_a\end{subarray}} \!\!\left(|R(A_{i-1}\cup\{a\}, X\cup (S\cap Y))| - |R(A_{i-1}, X)| \right).
\end{align*}

The following holds since the greedy phase of the algorithm selects a solution that maximizes the ratio between the marginal increment in the objective function and the cost:
\begin{itemize}
 \item For each $a\in A^*_1$, $\delta(A_{i-1}\cup\{a\}, S_{i-1} , A_{i-1},S_{i-1}) \leq \frac{\delta_i}{\bar{c}_i}$;
 \item For each $e=(a,v)\in S^*\setminus S_{i-1} $ such that  $a\in A_{i-1}$,\\ $\frac{\delta(A_{i-1}, S_{i-1}\cup\{e\}, A_{i-1},S_{i-1})}{c_e} \leq \frac{\delta_i}{\bar{c}_i}$;
 \item For each $a\in A^*_2$ and $e=(a,v)\in S^*\setminus S_{i-1}$ s.t. $c_e\geq b$, \\$\frac{\delta(A_{i-1}\cup\{a\}, S_{i-1}\cup\{e\}, A_{i-1},S_{i-1})}{1+c_e} \leq \frac{\delta_i}{\bar{c}_i}$;
 \item For each $a\in A^*_2$ and $S\in\mathcal{S}_a$,\\ $\frac{\delta(A_{i-1}\cup\{a\}, S_{i-1}\cup S, A_{i-1},S_{i-1})}{1+\sum_{e\in S}c_e} \leq \frac{\delta_i}{\bar{c}_i}\cdot \frac{1}{\alpha}$,
\end{itemize}
where the term $\frac{1}{\alpha}$ in the last inequality is due to the use of an $\alpha$-approximation algorithm in the computation of the fourth candidate solution of the greedy phase.
The number of edges incident to each $a\in A^*_2$  with cost at least $b$ is at most $\frac{k}{b}$. Therefore, the right hand side of~\eqref{eq:one} is at most:
\begin{align*}
 &\sum_{ a\in A^*_1} \frac{\delta_i}{\bar{c}_i} 
 + \sum_{\begin{subarray}{c} e=(a,v)\in S^*\setminus S_{i-1}\\ \text{s.t. } a\in A_{i-1}\end{subarray}} \frac{\delta_i}{\bar{c}_i}c_e
  + \sum_{\begin{subarray}{c}a\in A^*_2\\ e=(a,v)\in S^*\setminus S_{i-1}\\c_e \geq b\end{subarray}} \frac{\delta_i}{\bar{c}_i}(1+c_e) 
  + \sum_{\begin{subarray}{c} a\in A^*_2\\ S\in\mathcal{S}_a\end{subarray}}\frac{1}{\alpha} \frac{\delta_i}{\bar{c}_i}\left(1+\sum_{e\in S}c_e\right) 
\\ &= \frac{\delta_i}{\bar{c}_i} \Bigg( |A^*_1| 
  + \sum_{\begin{subarray}{c} e=(a,v)\in S^*\setminus S_{i-1}\\ \text{s.t. } a\in A_{i-1}\end{subarray}} c_e 
+\frac{k}{b}+ \sum_{\begin{subarray}{c}a\in A^*_2\\ e=(a,v)\in S^*\setminus S_{i-1}\\c_e \geq b\end{subarray}}c_e
+\frac{1}{\alpha} \frac{2k}{b}+\frac{1}{\alpha} \sum_{\begin{subarray}{c} a\in A^*_2\\ S\in\mathcal{S}_a\end{subarray}}\sum_{e\in S}c_e\Bigg)\\ 
&\leq \left(1+\frac{1}{b}+\frac{2}{b\alpha}+\frac{1}{\alpha}\right)k \frac{\delta_i}{\bar{c}_i}=\left(\frac{b\alpha+\alpha +2+b}{b\alpha}\right)k \frac{\delta_i}{\bar{c}_i}.
\end{align*}
To conclude, we observe that by Proposition~\ref{prop:delta}, it follows that $$\delta_i =  \sigma(A_i,S_i) - \sigma(A_{i-1},S_{i-1}).$$
\end{proof}

\begin{lemma}\label{lem:induction}
 After each iteration $j_i$, $i=1,2,\ldots,l+1$, $$\sigma(A_i,S_i) \geq \left[ 1- \prod_{\ell=1}^i\left( 1 - \frac{\bar{c}_\ell}{k}\frac{b\alpha}{b\alpha+b+\alpha +2}\right) \right]\sigma(A^*,S^*).$$
\end{lemma}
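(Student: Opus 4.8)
The plan is to proceed exactly as in the proof of Lemma~\ref{lem:induction_min}, by induction on the iterations $j_i$, using Lemma~\ref{lem:marginalcosts} as the one-step estimate. To lighten the notation, write $\gamma=\frac{b\alpha}{b\alpha+b+\alpha+2}$, so that Lemma~\ref{lem:marginalcosts} reads $\sigma(A_i,S_i)-\sigma(A_{i-1},S_{i-1})\geq \frac{\bar{c}_i}{k}\gamma\,(\sigma(A^*,S^*)-\sigma(A_{i-1},S_{i-1}))$. The whole argument is then the same recursion solved in the previous section, with the contraction factor $\frac{c_{\min}}{1+c_{\min}}$ replaced by $\gamma$.

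For the base case $i=1$, I would note that $(A_0,S_0)=(\emptyset,\emptyset)$ and hence $\sigma(A_0,S_0)=0$; substituting this into Lemma~\ref{lem:marginalcosts} gives $\sigma(A_1,S_1)\geq \frac{\bar{c}_1}{k}\gamma\,\sigma(A^*,S^*)=\bigl[1-(1-\tfrac{\bar{c}_1}{k}\gamma)\bigr]\sigma(A^*,S^*)$, which is precisely the claim for $i=1$.

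For the inductive step, assuming the bound for $i-1$, I would write $\sigma(A_i,S_i)=\sigma(A_{i-1},S_{i-1})+[\sigma(A_i,S_i)-\sigma(A_{i-1},S_{i-1})]$, lower-bound the bracketed increment with Lemma~\ref{lem:marginalcosts}, and regroup to obtain $\sigma(A_i,S_i)\geq \sigma(A_{i-1},S_{i-1})\bigl(1-\tfrac{\bar{c}_i}{k}\gamma\bigr)+\tfrac{\bar{c}_i}{k}\gamma\,\sigma(A^*,S^*)$. Since $\gamma<1$ and each marginal cost $\bar{c}_i$ is at most the budget $k$, the factor $1-\tfrac{\bar{c}_i}{k}\gamma$ is nonnegative, so plugging the inductive hypothesis in for $\sigma(A_{i-1},S_{i-1})$ preserves the inequality. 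The proof then closes with the elementary telescoping identity $(1-P)(1-q)+q=1-P(1-q)$, applied with $P=\prod_{\ell=1}^{i-1}(1-\tfrac{\bar{c}_\ell}{k}\gamma)$ and $q=\tfrac{\bar{c}_i}{k}\gamma$, which turns the right-hand side into $\bigl[1-\prod_{\ell=1}^{i}(1-\tfrac{\bar{c}_\ell}{k}\gamma)\bigr]\sigma(A^*,S^*)$.

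There is essentially no genuine obstacle: the argument is formally identical to Lemma~\ref{lem:induction_min}, differing only in the value of the contraction factor. The single point deserving a word of care is checking that $1-\tfrac{\bar{c}_i}{k}\gamma\geq 0$, so that multiplying the inductive hypothesis by this factor does not reverse the inequality; this holds because $\gamma<1$ and $\bar{c}_i\leq k$.
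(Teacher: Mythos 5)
Your proof is correct and takes essentially the same route as the paper's: the same induction on the iterations $j_i$, with Lemma~\ref{lem:marginalcosts} as the one-step estimate, the same regrouping $\sigma(A_{i-1},S_{i-1})\left(1-\frac{\bar{c}_i}{k}\gamma\right)+\frac{\bar{c}_i}{k}\gamma\,\sigma(A^*,S^*)$, and the same product identity to close the induction. Your two added remarks---that the base case uses $\sigma(\emptyset,\emptyset)=0$ and that $1-\frac{\bar{c}_i}{k}\gamma\geq 0$ is needed before multiplying the inductive hypothesis through---are points the paper leaves implicit, and they only strengthen the write-up.
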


\begin{proof}
 We show the statement by induction on iterations $j_i$.
 For $i=1$, by Lemma~\ref{lem:marginalcosts}, $\sigma(A_1,S_1)  \geq \frac{\bar{c}_1}{k}\frac{b\alpha}{b\alpha+b+\alpha +2} \sigma(A^*,S^*) =\left[ 1- \left( 1 - \frac{\bar{c}_1}{k}\frac{b\alpha}{b\alpha+b+\alpha +2}\right) \right]\sigma(A^*,S^*)$. Let us assume that the statement holds for $j_1,j_2,\ldots,j_{i-1}$, then 
 \begin{align*}
  \sigma(A_i,S_i) &= \sigma(A_{i-1},S_{i-1}) + \left[ \sigma(A_i,S_i) - \sigma(A_{i-1},S_{i-1})\right]\\
                  &\geq \sigma(A_{i-1},S_{i-1}) + \frac{\bar{c}_i}{k}\frac{b\alpha}{b\alpha+b+\alpha +2}\left[ \sigma(A^*,S^*) - \sigma(A_{i-1},S_{i-1})\right]\\
                  &= \sigma(A_{i-1},S_{i-1}) \left(1-\frac{\bar{c}_i}{k}\frac{b\alpha}{b\alpha+b+\alpha +2}\right) + \frac{\bar{c}_i}{k}\frac{b\alpha}{b\alpha+b+\alpha +2}\sigma(A^*,S^*)\\
                  &\geq \left[ 1- \prod_{\ell=1}^{i-1}\left( 1 - \frac{\bar{c}_\ell}{k}\frac{b\alpha}{b\alpha+b+\alpha +2}\right) \right]\left(1-\frac{\bar{c}_i}{k}\frac{b\alpha}{b\alpha+b+\alpha +2}\right)\sigma(A^*,S^*)\\&\quad\quad\quad\quad\quad\quad\quad\quad\quad\quad\quad\quad\quad\quad\quad\quad\quad\quad+ \frac{\bar{c}_i}{k}\frac{b\alpha}{b\alpha+b+\alpha +2}\sigma(A^*,S^*)\\
                  &=\left[ 1- \prod_{\ell=1}^i\left( 1 - \frac{\bar{c}_\ell}{k}\frac{b\alpha}{b\alpha+b+\alpha +2}\right) \right]\sigma(A^*,S^*),
 \end{align*}
where the two inequalities follows from Lemma~\ref{lem:marginalcosts} and the inductive hypothesis, respectively.
\end{proof}
Therefore, equipped with Lemma~\ref{lem:induction}, we can prove the next theorem.
\begin{theorem}\label{th:final}
 The greedy algorithm achieves an approximation factor of  $$1-\frac{1}{e^{\frac{b\alpha}{b\alpha+b+\alpha +2}}}.$$
\end{theorem}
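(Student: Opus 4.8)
The plan is to reuse, almost verbatim, the first half of the proof of Theorem~\ref{lem:final_mincost} (the part establishing inequality~\eqref{eq:approx}), with Lemma~\ref{lem:induction} now playing the role that Lemma~\ref{lem:induction_min} played there. The statement concerns the infeasible solution $(A_{l+1},S_{l+1})$ obtained immediately after the greedy phase considers, at iteration $j_{l+1}$, the element of the optimum that overflows the budget; all the conceptual work has already been absorbed into Lemmas~\ref{lem:marginalcosts} and~\ref{lem:induction}, so what is left is essentially bookkeeping and one elementary inequality.

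First I would record two facts about $(A_{l+1},S_{l+1})$. Since the element considered at iteration $j_{l+1}$ is the first one rejected for violating the budget, $c(A_{l+1},S_{l+1})>k$; and in each of the four greedy cases the marginal cost $\bar{c}_i$ equals exactly the incremental cost of what is added at step $j_i$ (one of $1$, $c_e$, $1+c_e$, or $1+\sum_{e\in S}c_e$), so that $\sum_{\ell=1}^{l+1}\bar{c}_\ell=c(A_{l+1},S_{l+1})$. Writing $\beta=\frac{b\alpha+b+\alpha+2}{b\alpha}$ so that $\frac{b\alpha}{b\alpha+b+\alpha+2}=\frac1\beta$, I would instantiate Lemma~\ref{lem:induction} at $i=l+1$ to get
\[
\sigma(A_{l+1},S_{l+1})\geq\left[1-\prod_{\ell=1}^{l+1}\left(1-\frac{\bar{c}_\ell}{k\beta}\right)\right]\sigma(A^*,S^*).
\]
Because $k<c(A_{l+1},S_{l+1})$ and each factor $1-\frac{\bar{c}_\ell}{k\beta}$ is nonnegative, replacing $k$ by the larger quantity $c(A_{l+1},S_{l+1})$ in the denominators only shrinks each factor and hence lower-bounds the bracket. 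Applying the elementary inequality recalled in Theorem~\ref{lem:final_mincost}---that for nonnegative $a_1,\dots,a_n$ with $\sum_\ell a_\ell=B$ the quantity $1-\prod_\ell(1-\frac{a_\ell}{B\beta})$ is minimized when the $a_\ell$ are equal and is at least $1-(1-\frac{1}{n\beta})^n\geq 1-e^{-1/\beta}$---with $a_\ell=\bar{c}_\ell$, $B=c(A_{l+1},S_{l+1})$ and $n=l+1$, I obtain
\[
\sigma(A_{l+1},S_{l+1})\geq\left(1-e^{-1/\beta}\right)\sigma(A^*,S^*)=\left(1-\frac{1}{e^{\frac{b\alpha}{b\alpha+b+\alpha+2}}}\right)\sigma(A^*,S^*),
\]
which is the claimed bound.

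The only delicate point, and the one I expect to be the main obstacle, is the fourth greedy case, in which a single iteration may add a whole set $S$ of cheap edges together with a seed and where the ratio bound carries the extra factor $\frac1\alpha$ coming from the approximate solution of R\MICM. I would check that the identity $\bar{c}_i=1+\sum_{e\in S}c_e$ still matches the true decrement of the budget, so that both the telescoping $\sum_\ell\bar{c}_\ell=c(A_{l+1},S_{l+1})$ and the nonnegativity of the factors survive; granting this, the monotonicity substitution $k\to c(A_{l+1},S_{l+1})$ and the AM--GM step go through exactly as in the lower-bounded case, and no estimate beyond Lemma~\ref{lem:induction} is needed. (As in Theorem~\ref{lem:final_mincost}, turning this bound on the infeasible $(A_{l+1},S_{l+1})$ into a guarantee for the feasible output is a separate, subsequent step, handled by comparing $\delta_{l+1}$ against the single-seed candidate solutions.)
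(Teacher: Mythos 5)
There is a genuine gap, and it sits exactly in the step you set aside as ``separate, subsequent''. Your plan transplants the proof of Theorem~\ref{lem:final_mincost}: bound the infeasible solution, $\sigma(A_{l+1},S_{l+1}) \geq \left(1-e^{-1/\beta}\right)\sigma(A^*,S^*)$ with $\beta = \frac{b\alpha+b+\alpha+2}{b\alpha}$, and then convert via $\sigma(A_{l+1},S_{l+1}) = \sigma(A_l,S_l)+\delta_{l+1}$ and $\delta_{l+1}\leq\sigma(\{a_M\},S_M)$. But that conversion is precisely what introduces the factor $\frac{1}{2}$ in Theorem~\ref{lem:final_mincost}: it only yields $\max\{\sigma(A_l,S_l),\sigma(\{a_M\},S_M)\}\geq\frac{1}{2}\left(1-e^{-1/\beta}\right)\sigma(A^*,S^*)$. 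Theorem~\ref{th:final} claims the factor \emph{without} the $\frac{1}{2}$, so your route proves a strictly weaker statement; nor can the loss be repaired within your inequalities, since bounding $\delta_{l+1}$ by (say) $\frac{1}{2}\sigma(A^*,S^*)$ gives $\sigma(A_l,S_l)\geq\left(1-e^{-1/\beta}-\frac{1}{2}\right)\sigma(A^*,S^*)$, which is vacuous because here $1-e^{-1/\beta}<\frac{1}{2}$.

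The paper's proof avoids the halving with two ingredients absent from your proposal. First, for $b,\alpha\in(0,1]$ one has $b+\alpha+2\geq 4b\alpha$, so $\frac{b\alpha}{b\alpha+b+\alpha+2}\leq\frac{1}{5}$ and the target constant satisfies $1-e^{-1/\beta}\leq 1-e^{-1/5}<\frac{1}{2}$; hence, if some single node-plus-edge candidate influences at least $\frac{1}{2}\sigma(A^*,S^*)$, the algorithm's second candidate solution alone already meets the bound and the proof ends there. Second, in the complementary case the paper never invokes $(A_{l+1},S_{l+1})$ at all: it argues that the rejected marginal cost satisfies $\bar{c}_{l+1}\leq 1+b$, whence $c(A_l,S_l)\geq k-1-b$ (otherwise the rejected element would have fit within the budget), so the feasible prefix already nearly saturates the budget; it then applies Lemma~\ref{lem:induction} at $i=l$, replacing $k$ by the larger $c(A_l,S_l)+1+b$ in the denominators, and bounds the \emph{feasible} value $\sigma(A_l,S_l)$ directly by $\left(1-e^{-1/\beta}\right)\sigma(A^*,S^*)$, with no appeal to $\delta_{l+1}$ and no factor $\frac{1}{2}$. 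Incidentally, the point you flagged as the main obstacle---the case-four bookkeeping $\bar{c}_i=1+\sum_{e\in S}c_e$ and the telescoping $\sum_{\ell}\bar{c}_\ell=c(A_{l+1},S_{l+1})$---is indeed unproblematic, as all the $\alpha$-related subtlety was already absorbed into Lemma~\ref{lem:marginalcosts}; the missing ideas are the case analysis on a large single-element candidate and the budget-saturation bound $c(A_l,S_l)\geq k-1-b$.
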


\begin{proof}
We recall that, for a sequence of numbers $a_1,a_2,\ldots,a_n$ such that $\sum_{\ell=1}^n a_\ell = \beta \cdot B$, the function $\left[ 1 - \prod_{i=1}^n \left(1 -\frac{a_i}{B}\right)\right]$ achieves its minimum when $a_i=\frac{\beta\cdot B}{n}$ and that 
$$
\left[ 1 - \prod_{i=1}^n \left(1 -\frac{a_i}{B}\right)\right]\geq 1-\left(1-\frac{\beta}{n}\right)^n\geq 1-e^{-\beta}.
$$
To prove the theorem we analyse the following cases. 
\begin{itemize}
\item First, note that if there is a choice of a node and an edge $(\{a\}, \{(a,v)\})$ influencing a fraction of nodes greater than $\frac{1}{2}\sigma(A^*, S^*)$, i.e. $$\sigma(\{a\}, \{(a,v)\})\geq \frac{1}{2}\sigma(A^*, S^*),$$ then this solution, or a solution influencing a grater number of nodes, will be examined by the algorithm as a
candidate solution, resulting in a final solution having
the value of at least $\frac{1}{2}\sigma(A^*, S^*)$. 
\item Now assume that no element influences a number of nodes greater than $\frac{1}{2}\sigma(A^*, S^*)$. 
We notice that $c(A_l, S_l) \geq k-1-b$, otherwise we obtain the contradiction $\bar{c}_{l+1}>1+b$ which is false since we have that $\bar{c}_{l+1}\leq 1+b$. In particular, the contradiction follows from:\\ $c(A_l, S_l)<k-1-b$ and $c(A_{l+1}, S_{l+1})>k$.
Then, by Lemma~\ref{lem:induction}, 	  for $i=l$ we obtain:
\begin{align*}
 \sigma(A_l,S_l) & \geq \left[ 1- \prod_{\ell=1}^{l}\left( 1 - \frac{\bar{c}_\ell}{k\left(\frac{b\alpha+b+\alpha +2}{b\alpha}\right)}\right) \right]\sigma(A^*,S^*)\\
  & \geq \left[ 1- \prod_{\ell=1}^{l}\left( 1 - \frac{\bar{c}_\ell}{(c(A_l, S_l) + 1+ b)\left(\frac{b\alpha+b+\alpha +2}{b\alpha}\right)}\right) \right]\sigma(A^*,S^*)\\
                         & \geq \left[ 1- \left( 1 - \frac{1}{(l+1+b)\frac{b\alpha+b+\alpha +2}{b\alpha}}\right)^{l} \right]\sigma(A^*,S^*)\\
                          &\geq \left(1-\frac{1}{e^{\frac{b\alpha}{b\alpha+b+\alpha +2}}}\right) \sigma(A^*,S^*).
\end{align*}
\end{itemize} 
Thus, in each of the cases, a value of the solution produced by the algorithm is at least $\left(1-\frac{1}{e^{\frac{b\alpha}{b\alpha+b+\alpha +2}}}\right)\sigma(A^*,S^*)$, and the theorem follows.
\end{proof}

It remains to give an $\alpha$-approximation algorithm for the R\MICM problem which consists in selecting a seed node $a$ not in the current solution $(A',S')$ and a set $S$ of edges incident to $a$, which overall cost is smaller than $b$ (i.e. $\sum_{(a,v) \in S}{c_{(a,v)}}<b$), that maximize $\frac{\delta(A'\cup \{a\},S'\cup S,A',S')}{1+\sum_{(a,v) \in S}{c_{(a, v)}}}$.
To this aim we define the  \CICM ~\cite{TCS17} problem as follows. Given a directed graph $G=(V,E, p, c)$, an integer $k\in\mathbb{N}$ and a seed set $A$, find a set of edges $S\subseteq A\times V\setminus E$ such that $c(S)\leq k$ and $\sigma(A, S)$ is maximized.

Let us consider the instances of the \CICM problem where $A=\{a\}$ and let $m^*$ be the maximum, over all $a\in V\setminus A'$, among the optima of these instances. 
It is easy to show that $m^*$ is at least $b+1$ times the optimum of R\CICM. It has been shown that the \CICM problem can be approximated within a factor of $1-\frac{1}{e}$ by using a greedy algorithm~\cite{TCS17}. Therefore we obtain an overall approximation of $\alpha=\left(1-\frac{1}{e}\right)\frac{1}{b+1}$.
\begin{corollary}
There exists an algorithm that achieves an approximation factor $> 0.0878$ for the \MICM problem.
\end{corollary}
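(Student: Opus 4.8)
The plan is to instantiate Theorem~\ref{th:final} with the concrete ratio $\alpha$ established for the ratio subproblem (R\MICM) just above the corollary, and then optimize over the single free parameter $b$. Recall that R\MICM is solved within $\alpha = \left(1-\frac{1}{e}\right)\frac{1}{b+1}$ by reducing it to \CICM, whose single-seed instances admit a $\left(1-\frac1e\right)$-approximation, and paying the factor $b+1$ that separates the two objectives. Substituting this $\alpha$ into the guarantee $1 - e^{-h(b,\alpha)}$ of Theorem~\ref{th:final}, with $h(b,\alpha) = \frac{b\alpha}{b\alpha + b + \alpha + 2}$, turns the bound into a function of $b$ alone.

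The key algebraic simplification I would exploit is that $b\alpha + \alpha = \alpha(b+1) = 1 - \frac1e =: q$ is independent of $b$, so the denominator of the exponent collapses to $q + b + 2$ and $h(b) = \frac{bq}{(b+1)(b+q+2)}$. I would then maximize $h$ over $b>0$: differentiating the rational function $\frac{b}{(b+1)(b+q+2)} = \frac{b}{b^2 + (q+3)b + (q+2)}$ and setting its numerator $-b^2 + (q+2)$ to zero gives the optimal choice $b = \sqrt{q+2} = \sqrt{3 - \frac1e}$. Plugging this back in (using $b^2 = q+2$) yields the closed form $h = \frac{q}{2\sqrt{q+2} + q + 3} \approx 0.0919$, so the approximation factor equals $1 - e^{-h} > 0.0878$, which I would confirm by a short numerical evaluation to certify the strict inequality.

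The step needing the most care is checking that this is indeed the binding branch of Theorem~\ref{th:final}. Since $1 - e^{-0.0919} \approx 0.0878 < \tfrac12$, the greedy branch rather than the single node-edge candidate (which already guarantees $\tfrac12\sigma(A^*,S^*)$) governs the overall ratio, so the value delivered by the theorem is exactly $1 - e^{-h}$. I would also flag that the optimizer $b = \sqrt{3-\frac1e} \approx 1.62$ exceeds $1$; because every edge cost lies in $[0,1]$, this merely means all edges land in the ``small-cost'' bucket and the $r_3$ branch is vacuous. This is harmless for the analysis: the bin-packing bound $|\mathcal{S}_a| \leq 2k/b$ and the correctness of the fourth candidate used in Lemma~\ref{lem:marginalcosts} remain valid for $b>1$, so the hypotheses of Theorem~\ref{th:final} are met at the chosen $b$ and the corollary follows.
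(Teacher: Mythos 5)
Your proposal is correct and takes exactly the paper's route: the paper's entire proof is to substitute $\alpha=\left(1-\frac{1}{e}\right)\frac{1}{b+1}$ into Theorem~\ref{th:final} and optimize over $b$, and your explicit calculation — the collapse of the exponent to $h(b)=\frac{bq}{(b+1)(b+q+2)}$ with $q=1-\frac{1}{e}$, the optimizer $b=\sqrt{q+2}=\sqrt{3-\frac{1}{e}}$, and the resulting factor $1-e^{-h}\approx 0.08782>0.0878$ — checks out. Your two supplementary observations, that the greedy branch is the binding one (since $1-e^{-h}<\frac{1}{2}$) and that the optimal $b\approx 1.62>1$ simply renders the large-cost candidate vacuous without invalidating Lemma~\ref{lem:marginalcosts} or Theorem~\ref{th:final}, are accurate details the paper leaves implicit.
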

\begin{proof}
It follows by applying Theorem~\ref{th:final} with $\alpha=\left(1-\frac{1}{e}\right)\frac{1}{b+1}$ and optimizing over $b$.
\end{proof}

\section{Conclusions}
\begin{figure}[t]
\centering
 \scalebox{.9}{\input{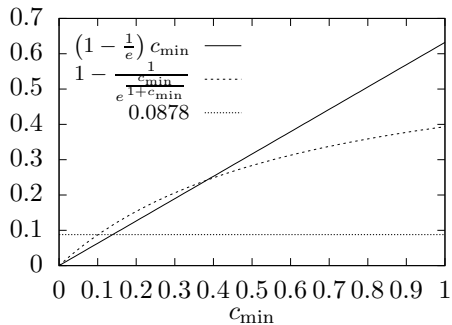}}
 \vspace{-4mm}
 \caption{Approximation ratio of the three algorithm presented as a function of $c_{\min}$. Intersection points are at $c_{\min}\approx0.1011$ and $c_{\min}\approx0.3821$}
 \label{fig:apxratio}
\end{figure}
In Figure~\ref{fig:apxratio} we summarize our approximation ratios as a function of $c_{\min}$. As expected, when all the edge costs are high, it is not worth to buy them. Indeed the algorithm that selects only seeds outperforms the other algorithms and it reaches the optimal approximation of $1-\frac{1}{e}$ when all the costs are equal to 1. However, the challenge of our problem consists in finding a good approximation even when the cost function includes small values. In the general case and when the minimum edge cost can be very small ($c_{\min}<0.1011$) the best algorithm is the constant factor algorithm given in Section~\ref{sec:algo}, while when the minimum cost is in $(0.1011,0.3821)$ the best algorithm is the one presented in Section~\ref{sec:algo_min}.

The main open problem is the gap between the lower bound on approximation of $1-\frac{1}{e}\approx 0.6321$ and the constant approximation of $0.0878$. To close this gap, we aim at improving the algorithm in Section~\ref{sec:algo} by devising a better approximation algorithm for R\CICM problem, which directly implies a better approximation for the \MICM problem.
Other research directions that deserve further investigation include the study of the \MICM problem on different information diffusion models such as LTM or the Triggering Model~\cite{KKT15}.

\newpage
\bibliographystyle{plainurl}
\bibliography{references}

\begin{thebibliography}{10}

\bibitem{ARLV01}
C.~Asavathiratham, S.~Roy, B.~Lesieutre, and G.~Verghese.
\newblock The influence model.
\newblock {\em IEEE Control Systems}, 21(6):52--64, 2001.

\bibitem{BHMW11}
Eytan Bakshy, Jake~M. Hofman, Winter~A. Mason, and Duncan~J. Watts.
\newblock Everyone's an influencer: Quantifying influence on twitter.
\newblock In {\em Proc. of the 4th ACM International Conference on Web Search
  and Data Mining}, WSDM11, pages 65--74. ACM, 2011.

\bibitem{B69}
Frank~M Bass.
\newblock A new product growth for model consumer durables.
\newblock {\em Management science}, 15(5):215--227, 1969.

\bibitem{CWW10}
Wei Chen, Chi Wang, and Yajun Wang.
\newblock Scalable influence maximization for prevalent viral marketing in
  large-scale social networks.
\newblock In {\em Proc. of the 16th ACM SIGKDD International Conference on
  Knowledge Discovery and Data Mining}, KDD10, 2010.

\bibitem{CKM66}
James~S. Coleman, Elihu Katz, and Herbert Menzel.
\newblock {\em Medical innovation: A diffusion study}.
\newblock The Bobbs-Merrill Company, 1966.

\bibitem{CDSV16}
Pierluigi Crescenzi, Gianlorenzo D'Angelo, Lorenzo Severini, and Yllka Velaj.
\newblock Greedily improving our own closeness centrality in a network.
\newblock {\em ACM Trans. Knowl. Discov. Data}, 11(1):9:1--9:32, 2016.

\bibitem{DAngeloSV16}
Gianlorenzo D'Angelo, Lorenzo Severini, and Yllka Velaj.
\newblock Influence maximization in the independent cascade model.
\newblock In {\em Proceedings of the 17th Italian Conference on Theoretical
  Computer Science, Lecce, Italy, September 7-9, 2016.}, pages 269--274, 2016.

\bibitem{TCS17}
Gianlorenzo D'Angelo, Lorenzo Severini, and Yllka Velaj.
\newblock Recommending links through influence maximization.
\newblock {\em arXiv preprint}, 2017.
\newblock URL: \url{http://arxiv.org/abs/1706.04368}.

\bibitem{DR01}
Pedro Domingos and Matt Richardson.
\newblock Mining the network value of customers.
\newblock In {\em Proceedings of the 7th ACM SIGKDD International Conference on
  Knowledge Discovery and Data Mining}, KDD01, pages 57--66. ACM, 2001.

\bibitem{GLM01}
Jacob Goldenberg, Barak Libai, and Eitan Muller.
\newblock Talk of the network: A complex systems look at the underlying process
  of word-of-mouth.
\newblock {\em Marketing letters}, 12(3):211--223, 2001.

\bibitem{GLM01a}
Jacob Goldenberg, Barak Libai, and Eitan Muller.
\newblock Using complex systems analysis to advance marketing theory
  development: Modeling heterogeneity effects on new product growth through
  stochastic cellular automata.
\newblock {\em Academy of Marketing Science Review}, 2001(9):1, 2001.

\bibitem{G78}
Mark Granovetter.
\newblock Threshold models of collective behavior.
\newblock {\em American journal of sociology}, 83(6):1420--1443, 1978.

\bibitem{KKT03}
David Kempe, Jon Kleinberg, and \'{E}va Tardos.
\newblock Maximizing the spread of influence through a social network.
\newblock In {\em Proceedings of the 9th ACM SIGKDD International Conference on
  Knowledge Discovery and Data Mining}, KDD03, pages 137--146. ACM, 2003.

\bibitem{KKT15}
David Kempe, Jon~M. Kleinberg, and {\'{E}}va Tardos.
\newblock Maximizing the spread of influence through a social network.
\newblock {\em Theory of Computing}, 11:105--147, 2015.

\bibitem{KDS14}
Elias~Boutros Khalil, Bistra Dilkina, and Le~Song.
\newblock Scalable diffusion-aware optimization of network topology.
\newblock In {\em Proceedings of the 20th ACM SIGKDD International Conference
  on Knowledge Discovery and Data Mining}, KDD14, pages 1226--1235. ACM, 2014.

\bibitem{KMN99}
Samir Khuller, Anna Moss, and Joseph Naor.
\newblock The budgeted maximum coverage problem.
\newblock {\em Inf. Process. Lett.}, 70(1):39--45, 1999.

\bibitem{Kimura08}
Masahiro Kimura, Kazumi Saito, and Hiroshi Motoda.
\newblock Solving the contamination minimization problem on networks for the
  linear threshold model.
\newblock In {\em Proc. of the 10th Pacific Rim International Conference on
  Artificial Intelligence}, PRICA08, pages 977--984. Springer Berlin
  Heidelberg, 2008.

\bibitem{Kimura}
Masahiro Kimura, Kazumi Saito, and Hiroshi Motoda.
\newblock Blocking links to minimize contamination spread in a social network.
\newblock {\em ACM Trans. Knowl. Discov. Data}, 3(2):9:1--9:23, 2009.

\bibitem{Kuhlman}
Chris~J Kuhlman, Gaurav Tuli, Samarth Swarup, Madhav~V Marathe, and SS~Ravi.
\newblock Blocking simple and complex contagion by edge removal.
\newblock In {\em IEEE International Conference on Data Mining}, ICDM13. IEEE,
  2013.

\bibitem{MMB90}
Vijay Mahajan, Eitan M{\"u}ller, and Frank~M Bass.
\newblock New product diffusion models in marketing: A review and directions
  for research.
\newblock {\em Journal of Marketing}, 54:1--26, 1990.

\bibitem{M07}
Lauren Meyers.
\newblock Contact network epidemiology: Bond percolation applied to infectious
  disease prediction and control.
\newblock {\em Bulletin of the American Mathematical Society}, 44(1):63--86,
  2007.

\bibitem{N02}
M.~E.~J. Newman.
\newblock Spread of epidemic disease on networks.
\newblock {\em Phys. Rev. E}, 66:016128, Jul 2002.

\bibitem{NG13}
H.~Nguyen and R.~Zheng.
\newblock On budgeted influence maximization in social networks.
\newblock {\em IEEE Journal on Selected Areas in Communications},
  31(6):1084--1094, June 2013.
\newblock \href {http://dx.doi.org/10.1109/JSAC.2013.130610}
  {\path{doi:10.1109/JSAC.2013.130610}}.

\bibitem{P15}
Manos Papagelis.
\newblock Refining social graph connectivity via shortcut edge addition.
\newblock {\em {ACM Trans. Knowl. Discov. Data}}, 10(2):12, 2015.

\bibitem{RD02}
Matthew Richardson and Pedro Domingos.
\newblock Mining knowledge-sharing sites for viral marketing.
\newblock In {\em Proceedings of the 8th ACM SIGKDD International Conference on
  Knowledge Discovery and Data Mining}, KDD02, pages 61--70. ACM, 2002.

\bibitem{R95}
E.~Rogers.
\newblock {\em Diffusion of innovations}.
\newblock Free Press, 1995.

\bibitem{S06}
Thomas~C Schelling.
\newblock {\em Micromotives and macrobehavior}.
\newblock Norton \& Company, 2006.

\bibitem{Sheldon}
Daniel Sheldon, Bistra~N. Dilkina, Adam~N. Elmachtoub, Ryan Finseth, Ashish
  Sabharwal, Jon Conrad, Carla~P. Gomes, David~B. Shmoys, William Allen, Ole
  Amundsen, and William Vaughan.
\newblock Maximizing the spread of cascades using network design.
\newblock {\em CoRR}, abs/1203.3514, 2012.
\newblock URL: \url{http://arxiv.org/abs/1203.3514}.

\bibitem{V95}
T.~Valente.
\newblock {\em Network models of the diffusion of innovations}.
\newblock Hampton Press, 1995.

\bibitem{WSZ15}
Xiaojian Wu, Daniel Sheldon, and Shlomo Zilberstein.
\newblock Efficient algorithms to optimize diffusion processes under the
  independent cascade model.
\newblock In {\em NIPS Workshop on Networks in the Social and Information
  Sciences}, Montreal, Quebec, Canada, 2015.

\bibitem{ZAVP15}
Y.~Zhang, A.~Adiga, A.~Vullikanti, and B.~A. Prakash.
\newblock Controlling propagation at group scale on networks.
\newblock In {\em IEEE International Conference on Data Mining}, ICDM15, pages
  619--628, 2015.

\end{thebibliography}

\end{document}